\numberwithin{equation}{section}
\newcommand{\ind}[1]{\ensuremath{\mathds{1}_{#1}}}
\newtheorem{definition}{Definition}[section]
\newtheorem{lemma}{Lemma}[section]
\newtheorem{proposition}{Proposition}[section]
\theoremstyle{definition}
\newtheorem{remark}{Remark}[section]
\newtheorem{example}{Example}[section]
\begin{document}

\author{Diaa Al Mohamad\footnote{Corresponding author. E-mail: diaa.almohamad@gmail.com} \;\; \; \;\; Erik W. van Zwet \;\;\;\;\; Jelle J. Goeman \;\; \; \;\; Aldo Solari
 \\ \normalsize{Leiden University Medical Center, The Netherlands}\\
  \normalsize{University of Milano-bicocca, Italy} 
\\
Last update 
}

\title{Simultaneous confidence sets for ranks using the partitioning principle - Technical report}

\maketitle

\begin{abstract}
Ranking institutions such as medical centers or universities is based on an indicator accompanied with an uncertainty measure such as a standard deviation, and confidence intervals should be calculated to assess the quality of these ranks. We consider the problem of constructing simultaneous confidence intervals for the ranks of centers based on an observed sample. We present in this paper a novel method based on multiple testing which uses the partitioning principle and employs the likelihood ratio (LR) test on the partitions. The complexity of the algorithm is super exponential. We present several ways and shortcuts to reduce this complexity. We provide also a polynomial algorithm which produces a very good bracketing for the multiple testing by linearizing the critical value of the LR test. We show that Tukey's Honest Significant Difference (HSD) test can be written as a partitioning procedure. The new methodology has promising properties in the sens that it opens the door in a simple and easy way to construct new methods which may trade the exponential complexity with power of the test or vice versa. In comparison to Tukey's HSD test, the LR test seems to give better results when the centers are close to each others or the uncertainty in the data is high which is confirmed during a simulation study. \\
\textbf{Keywords:} Confidence intervals, ranks, partitioning principle, likelihood ration test, PAVA, ordered hypothesis, Tukey's honest significant test.
\end{abstract}
\section{Introduction}
Performance indicators are increasingly used by the public sector to conclude some information about the performance of institutions such as medical centers, universities or social services. The interest in these indicators has increased during the past three decades. Intuitively, whenever one wants to book at a hotel or go to a hospital, he would aim for the best in some sense or at least make sure that the closest for example hospital is a good option and not ranked the worst among the available list of options. Besides, it is important for example for the government to know and decide towards which institutions (medical or educational) to direct its funds, and detect and be alarmed about those which have poor performance. As is generally the case in statistics, providing a numerical indicator about the performance or the degree of goodness of some institution is not sufficient because of the different variations and factors that influence on this indicator. Thus, a standard deviation is generally provided with these indicators. For a general reader, reading tables of performances with standard deviations is not a simple task or easily interpretable. Providing the ranks of the institution has a more intuitive and an easier aspect for the general reader and non specialists. It is however not straightforward to change performance indicators provided with their standard deviations into ranks especially if we want to produce a result with a simultaneous confidence for all the centers at a time. Goldstein and Speigelhalter \cite{GoldsteinSpiegel} discussed the importance of introducing performance indicators to compare institutions and addressed the statistical issues related to this problem. They point out that performance indicators lead in a natural way to ranking. They also emphasize on the need for "interval estimation" in order to detect the uncertainty accompanying these ranks. Nowadays, rankings of universities and medical institutions are being published regularly, but associated confidence intervals for their ranks are wide enough to make it difficult to draw a conclusion over the ranks of these institution, see \cite{Spiegelhalter}. Less effort is devoted in the literature towards improving these rankings by providing shorter confidence intervals for the same confidence level by either improving on the indicator used or on the methodology which produces the confidence intervals. In this paper, we are only interested in the latter.\\
More formally, Let $\mu_1,\cdots,\mu_n$ be $n$ real valued centers. Denote $r_1,\cdots,r_n$ the ranks of these centers respectively. When the centers are all different, the ranks are calculated by counting down how many centers are below the current center. When there are ties between the centers, we suppose that each of the tied centers possesses a set of ranks. In other words, assume that we have only 3 centers $\mu_1,\mu_2$ and $\mu_3$ such that $\mu_1=\mu_2<\mu_3$. Then, the rank of $\mu_1$ is the set $\{1,2\}$ and the rank of $\mu_2$ is also the set $\{1,2\}$, whereas the rank of $\mu_3$ is 3. A more formal definition is introduced in paragraph \ref{sec:HowToRank} hereafter.\\ 
Let $\mathcal{Y}=\{y_1,\cdots,y_n\}$ be a sample drawn from the Gaussian distributions
\begin{equation}
y_i \sim \mathcal{N}(\mu_i,\sigma_i^2), \quad \text{for } i\in\{1,\cdots,n\},
\label{eqn:TheGaussModel}
\end{equation}
where the standard deviations $\sigma_1,\cdots,\sigma_n$ are known whereas the centers $\mu_1,\cdots,\mu_n$ are unknown. We call the ranks induced from this observed sample the \emph{empirical ranks}\footnote{This is different from the empirical ranks in Bayesian approaches which estimate the ranks and call these estimates as empirical ranks.}. These ranks might be different from the true ranks of the centers. We aim on the basis of this sample to construct simultaneous confidence intervals for the ranks of the centers. In other words, for each $r_i$ we search for a confidence interval $[r_{i,L}(\mathcal{Y}),r_{i,U}(\mathcal{Y})]$ such that:
\[\mathbb{P}\left(r_i\in [r_{i,L}(\mathcal{Y}),r_{i,U}(\mathcal{Y})],\forall i\in\{1,\cdots,n\}\right)\geq 1-\alpha\]
for a pre-specified confidence level $1-\alpha$. Of course, when the rank $r_i$ is a set of ranks, then the signe $\in$ is replaced by an inclusion $\subset$.\\
In the literature, the ranking problem was considered in several papers from statistics and other fields especially the medical one. Several methods were introduced in order to create confidence intervals for the ranks, but generally no global comparison between these methods is available because of the vast statistical approaches and options. We mention four major approaches from the literature. The first one is Bootstrap-based (or frequentist) approaches, see \cite{GoldsteinSpiegel}, \cite{Spiegelhalter}, \cite{Zhang}, \cite{HallMillerInconsistRank}, \cite{XieMiddleRank}, \cite{Gerzoff}, \cite{Feudtner} among others. The second one are Bayesian approaches \cite{LairdThomasBayes}, \cite{LinThomasBayes}, \cite{LinThomasBayesBis}, \cite{NomaBayes}, \cite{LingsmaER}, \cite{Hans} among others. The third one is based on testing pairwise differences between the centers with (partial) or without correction for multiple testing, see \cite{LemmersZscore}, \cite{LemmersZscoreAgain}, \cite{HolmUnpublished}, \cite{TingBieMasterThesis} and \cite{OurTukeyPaper}. We mention the interesting paper \cite{Rafter} which summarizes well-known tests concerning multiple comparisons and detail their properties although it is not oriented towards producing ranks, see also the briefer paper \cite{KimMCMsReview}. The fourth approach is funnel plots which are used to detect divergence from average behavior of the compared centers using a $2\sigma$ ($95\%$ confidence interval) and a $3\sigma$ ($99.8\%$ confidence interval) rules, see \cite{TekkisFunnelPlot}, \cite{SpiegelhalterFunnelPlots} among others. We may also mention other works which use the parametric model (\ref{eqn:TheGaussModel}) to produce confidence intervals of some comparative score without truly producing ranks, see \cite{ParryWscore}. The state of the art about the topic is very vast, and this is only a subset of the published papers, and we hope this list provides an idea about the different approaches considered in the literature about the topic.\\

Bootstrap-based methods although being the most used in practice to build confidence intervals because of their intuitive and easy idea, they were criticized for their inability to handle (near) ties between the centers, see for example \cite{XieMiddleRank} and \cite{HallMillerInconsistRank}. Bayesian approaches focus generally on producing ranks estimates and have very little interest in producing (simultaneous) credible sets for them. Moreover, generally in Bayesian methods a continuous (Gaussian for example) prior on the centers is used. This \emph{implicitly} assumes that the probability of ties between the centers is zero. Funnel plots do not produce confidence intervals for each center, but a general idea about the "normal" behavior of a center and produce somewhat an "alert" and "alarm" zones for the behavior of the institutions. Moreover, approaches based on multiple comparisons, which are the least used in the literature to produce ranks, are either partially corrected for multiple testing or not at all. Thus, they do not really produce simultaneous confidence intervals. Note that in these approaches, and to the best of our knowledge, \emph{simultaneous} confidence intervals are never discussed except for the paper of \cite{Zhang} which is based on Bootstrapping and the paper of \cite{OurTukeyPaper} which is based on multiple comparison testing. \\
 
In this paper, we will see how we may use pairwise comparisons to produce simultaneous confidence intervals for the ranks by employing Tukey's HSD test. We will also present a new method which employs multiple testing and is based on the partitioning principle. The new procedure is based on the likelihood ratio (LR) test which was \emph{never} used or considered in the literature before for this particular purpose of producing confidence intervals for ranks. The LR test was only used in similar contexts to test hypotheses related to some ordering between the centers against all the alternatives (\cite{Robertson78}, \cite{BarlowBook}) or even to test equality between all the centers against some ordering over the centers (\cite{BartholomewPAVA}, \cite{Robertson78}). The two procedures, Tukey's HSD and the LR, presented in this paper have different power properties. While Tukey's procedure aims at identifying significant differences between two centers, our LR-based procedure aims at identifying differences within sub-groups of centers by depending on the (possibly small) contributions produced by each of the centers inside these sub-groups. The simulations at the end of this paper show cases where each approach outperforms the other showing that there is no uniformly winner and giving the interested reader the choice according to his point of view about the application and the goal of his study.

The paper is organized as follows. In Section \ref{sec:RanksDef}, we give a formal definition of the ranking problem and set the objectives. In Section \ref{sec:PartitionPrincip}, we present the partitioning principle and explain how to use it in order to produce confidence intervals for the ranks. In Section \ref{sec:LRTest}, we present the likelihood ratio (LR) test for ordered hypotheses (against all alternatives). In Section \ref{sec:shortcuts}, we give an idea about the complexity of the partitioning procedure and introduce some shortcuts which greatly simplify the complexity. Section \ref{sec:Algos} is devoted for practical issues where two algorithms are introduced to perform the partitioning procedure. In Section \ref{sec:ApproxPartition}, we introduce a way to approximate the partitioning result with a polynomial algorithm. In Section \ref{sec:TukeyHSD}, we present briefly other methods from the literature and focus on Tukey's honest significant test. Finally, in Section \ref{sec:simulations}, we study the results of the LR test and Tukey's HSD on simulated samples. Software to perform the methods presented in this paper are available in the \texttt{ICRanks} package downloadable from CRAN.
\section{Simultaneous confidence intervals for ranks}\label{sec:RanksDef}
Consider the sample $\mathcal{Y} = \{y_1,\cdots,y_n\}$ distributed independently from the Gaussian distributions $\mathcal{N}(\mu_i,\sigma_i^2)$ where $(\mu_1,\cdots,\mu_n)\in\mathbb{R}^n$ is the vector of unknown true centers. The standard deviations $\sigma_1,\cdots,\sigma_n$ are considered known.
\begin{definition}[rank of a center in the presence of ties] \label{def:Ranks}
Let $\mathcal{S}_n$ be the set of permutations from $\{1,\cdots,n\}$ to itself. We define the ranking function
\begin{eqnarray}
R&:& \mathbb{R}^n \rightarrow \mathcal{P}\left(\mathcal{S}_n\right) \nonumber \\
R(\mu_1,\cdots,\mu_n) &=& \{p\in\mathcal{S}_n: \; \mu_{p(1)}\leq \cdots \leq \mu_{p(n)}\}
\label{eqn:RankDef}
\end{eqnarray} 
The rank of center $\mu_i$ is then given by the set $\{p(i),\; p\in R(\mu_1,\cdots,\mu_n)\}$.
\end{definition}
This definition reads as follows. The ranks of centers $\mu_1,\cdots,\mu_n$ are the set of \emph{all} permutations of $\mathcal{S}_n$ such that the centers permute with each others through them. This is to emphasize on the fact that not any combination of permutations produce a valid set of ranks for the centers. A \emph{valid set of ranks} must verify the following condition. Let $R_i(\mu_1,\cdots,\mu_n) = \{a_{1,i},\cdots,a_{k,i}\}$ be the sorted set of ranks obtained from the permutations corresponding to center $\mu_i$. In order for the set $R_i(\mu_1,\cdots,\mu_n)$ to be a valid set of ranks, all natural numbers between $a_{1,i}$ and $a_{k,i}$ must be included in the set $R_i(\mu_1,\cdots,\mu_n)$. In other words, it must not skip any rank between the lowest and the highest admissible rank for center $\mu_i$. This is the reason why in formula (\ref{eqn:RankDef}), \emph{all} permutations verify $\mu_{p(1)}\leq \cdots \leq \mu_{p(n)}$ must be included so that they constitute a valid set of ranks.\\
Since the observed values $y_1,\cdots,y_n$ are modeled using a Gaussian distribution, the empirical ranking $R(y_1,\cdots,y_n)$ contains only one permutation. Using Definition \ref{def:Ranks}, the objective of this paper becomes the search for a subset $S(\mathcal{Y})$ of $\mathcal{P}(\mathcal{S}_n)$ implying valid ranks such that
\[\mathbb{P}\left(R(\mu_1,\cdots,\mu_n)\subset S(\mathcal{Y})\right) \geq 1-\alpha.\]
Thus, we are looking for a confidence set which contains the set-ranks with probability at least $1-\alpha$. In the literature, we talk about confidence intervals for the ranks instead of confidence sets, so that we continue to adopt this notation instead of talking about a confidence sets. Besides, as we talk about a confidence interval for the ranks, we avoid the subtlety whether a confidence set of ranks is a valid set of ranks or not. An interval (in the discrete space of ranks) by definition contains all ranks between the lowest and highest point in it.\\
We propose to use multiple testing techniques in order to construct the set of simultaneous confidence intervals. Therefore, it is important to start by precising the set of elementary hypotheses we need to test. Since we are looking to infer about the ranks, let $H_I:R(\mu_1,\cdots,\mu_n) = \{p_i,i\in I\}$ for $p_i\in \mathcal{S}_n$ and some subset $I$ from $\mathcal{P}(\{1,\cdots,n\})$ such that $\{p_i,i\in I\}$ is a valid set of ranks\footnote{If it is not a valid set of ranks, there is no meaning in testing it because the equality $R(\mu_1,\cdots,\mu_n) = \{p_i,i\in I\}$ will never hold.}. The set of elementary hypotheses contains then all possibilities for the set $I$ in  $\mathcal{P}(\{1,\cdots,n\})\}$ which result in valid sets of ranks. \\
In the literature on multiple testing, it is well-known that when we test jointly several hypotheses, we need to count for the inflammation of type I error and correct the critical level, see \cite{GoemanSolariMultHypoGenom}. Several methods exist in the literature to correct for multiple testing. The partitioning principle (\cite{Finner}, \cite{GoemanSeqRej}) is considered as a powerful method to correct for type I error and is considered superior to most other existing methods which is the motivation why we propose to use it here.
\begin{remark}
In this paper, we adapt the concept of mapping tied centers into intervals of ranks instead of a single rank. It is also possible to map equal ranks to a single value; the middle rank, see \cite{LairdThomasBayes} and \cite{XieMiddleRank}. If the 1st and the 2nd centers are equal, then each one of them get a middle rank equal to $\frac{1}{2}$. The ranks are then defined by (\cite{XieMiddleRank})
\[r_i = 1+\sum_{j\neq i}{\ind{\mu_i = \mu_j}} + \frac{1}{2}\sum_{i\neq j}{\ind{\mu_i = \mu_j}}.\]
This coincides with equation (\ref{eqn:RankDef}) when there are no ties. Xie et al. consider continuous estimators for these ranks which in a special case appear as if we are summing up p-values of all pairwise comparisons of the center $\mu_i$ w.r.t. other centers, see also \cite{LingsmaER}. The interpretation of middle ranks and continuous estimators is not easy especially for non-specialists. Moreover, although it is possible to produce integer ranks based on estimated (continuous) ranks by ranking them again, related confidence intervals are significantly wider than those produced for the estimated (continuous) ranks, see \cite{XieMiddleRank}. Therefore, this approach is not considered or pursued in this paper.
\end{remark}

\section{The partitioning principle}\label{sec:PartitionPrincip}
The main idea in the partitioning principle (\cite{Finner}) is to partition the union of the hypotheses of interest into disjoint sub-hypotheses or partitions such that each hypothesis can be represented as the disjoint union of some of these partitions. We then reject an elementary hypothesis if \emph{all} sub-hypotheses included in it are rejected, otherwise it will not be rejected.\\
In other words, let $\mathcal{H}$ be the set of hypotheses that we need to reject. Define the set of all sub-hypotheses, say $\mathcal{P}$. For any hypothesis $H\in\mathcal{H}$, we must be able to find a set of sub-hypotheses $\{H_i, i\in I\}$ such that $H\subset \cup_{i\in I} H_i$. The inclusion here means that for all $\phi\in H$, there exists $i\in I$ such that $\phi\in H_i$. Moreover, any two sub-hypotheses must be disjoint, that is $\forall H, K\in\mathcal{P}, H\cap K=\emptyset$.\\
How does it work? In order to reject a hypothesis $H$, we look at all sub-hypotheses from $\mathcal{P}$ which have an intersection with $H$. If they are all rejected at level $\alpha$, then $H$ is rejected at level $\alpha$ too. Otherwise, $H$ is not rejected. \\
The partitioning principle ensures familywise error (FWER) control at level $\alpha$, see \cite{GoemanSeqRej}. The key idea is that since the tested partitions are disjoint, at most one of them is true. Therefore, even though no multiplicity adjustment to the levels of the tests is made, the probability of rejecting at least one true null hypothesis is at most $\alpha$.\\
There are several ways to construct the partitioning of the parameter space, but we need to choose the one which suits us the best in terms of power and execution time. The latter may be a very important factor. Indeed, producing the set of partitions can easily make the number of hypotheses to be tested grow exponentially with the dimension of the parameter space (the number of the centers). It is thus important to find a way to reduce this complexity by finding relations among tested partitions. In the literature, these relations are called shortcuts \cite{CalianShortcuts}. In our work, a very essential shortcut will be used in order to reduce the complexity of the algorithm from a super exponentially complex algorithm into an exponentially complex one, see Section \ref{sec:shortcuts}. Using also some simple tricks, we are able to test all the partitions in a reasonable time up to some size of the dataset (at least $n=50$), see Section \ref{sec:Algos}.\\
The Partitioning principle was first introduced by \cite{PartitioningPrinciple}. Finner and Strassburger \cite{Finner} proved that the Partitioning principle is at least as powerful as closed testing (\cite{MarcusClosedTesting}) and can be more powerful in some situations. Note that other existing methods which correct for multiple testing are still in the competition because it is not always (easy or) possible to generate a suitable partitioning\footnote{The same holds for the closed testing procedure.}. Besides, the complexity of a partitioning procedure is generally extremely high and without finding shortcuts which reduce this complexity to a reasonable one, the partitioning principle might lose its appeal against other fast correction methods.


\section{Producing confidence intervals for ranks with the partitioning principle}\label{sec:HowToRank}
The way we formulated our elementary hypotheses in Section \ref{sec:RanksDef} makes the elementary hypotheses themselves a partitioning of the parameter space. Indeed, the parameter space is the union of all possible ranks in the sens of Definition \ref{def:Ranks}, that is a subset of $\mathcal{P}(\mathcal{S}_n)$ the set of all subsets of the set of permutations. Any two elementary hypotheses are disjoint. Indeed, two different elementary hypotheses $H$ and $K$ must be different by at least one set-rank of a center, say $\mu_1$. For example, if $H$ says that $\mu_1$ has the set-rank $\{1,2,3\}$ whereas $K$ says that $\mu_1$ has the singleton set-rank $\{1\}$, then a vector $(\mu_1,\cdots,\mu_n)$ verifying the constraint of $H$ cannot verify the constraint of $K$ because $\mu_1$ cannot be equal to $\mu_2$ and strictly inferior to it in the same time. On the other hand, since we are considering all possible ranks, the set of partitions is equal to the parameter space, hence it constitutes a partitioning of it. Finally, it is the coarsest partitioning. This is immediate because we are considering all possible ranks as the partitioning.\\
Let us consider a very simple example where we have a set of three centers, namely $A,B$ and $C$. The elementary hypotheses are:
\[H_0 = \{(1,2,3)\} , H_1 = \{(1,3,2)\}, H_2 = \{(2,1,3)\}, H_3 = \{(3,2,1)\}, H_4 = \{(3,1,2)\}, H_5 = \{(2,3,1)\} \]
\[H_6 = \{(1,2,3),(1,3,2)\}, H_7 = \{(1,2,3),(2,1,3)\}, H_8 = \{(1,3,2),(3,1,2)\}, H_9 = \{(2,1,3),(2,3,1)\} \]
\[ H_{10} = \{(3,2,1),(3,1,2)\}, H_{11} = \{(3,2,1),(2,3,1)\}\]
\[H_{12} = \{(1,2,3),(1,3,2),(2,1,3),(3,2,1),(3,1,2),(2,3,1)\}.\]
We did not include all subsets of $\mathcal{S}_3$ simply because they do not form correct ranks. For example the set $\{(1,2,3),(3,2,1)\}$ gives center A the set of ranks $\{1,3\}$. This is not a valid set of ranks because it skips the middle rank, that is rank 2.\\
Each of the previous hypotheses can be represented in a more elegant and easy-to-read way using equality and inequality constraints. In figure (\ref{fig:Partitions3by3}), we illustrate all partitions for the three centers $A,B$ and $C$ using equality and inequality relations. This actually an equivalent way to represent the partitions without writing explicitly the set of corresponding permutations which will become unreadable as soon as the number of centers slightly increases. It is also easier to build a test for the partitions based on equality and inequality constraints than permutation constraints. The partitions as illustrated in figure (\ref{fig:Partitions3by3}) are as follows.
\begin{enumerate}
\item We have one partition defined with two equalities between the centers, namely $A=B=C$. This corresponds to hypothesis $H_{12}$;
\item We have 6 partitions defined with one inequality and one equality, namely:
\begin{eqnarray*}
A<B=C,\; A=B<C,\; B<A=C,\\
 B>A=C,\; C<A=B,\; A>B=C,
\end{eqnarray*}
which correspond to hypotheses $H_6,H_7,H_8,H_9,H_{10},H_{11}$.
\item We have 6 partitions defined with two inequalities, namely:
\begin{eqnarray*}
A<B<C,\; A<C<B,\; C<A<B, \\
B<A<C,\; B<C<A, \; C<B<A,
\end{eqnarray*}
which correspond to hypotheses $H_0,H_1,H_2,H_3,H_4,H_5$.
\end{enumerate}
\begin{figure}[ht]
\centering
\includegraphics[scale=0.55]{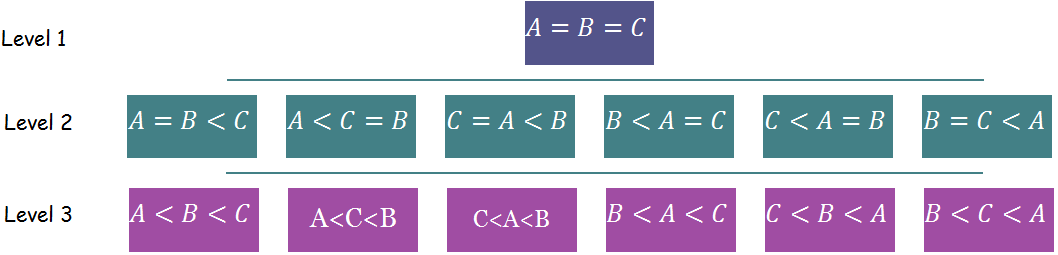}
\caption{Partitions with three individuals.}
\label{fig:Partitions3by3}
\end{figure}

In this paper we say that several hypotheses belong to the same level if they have the same number of equalities in their definition. In the 3-center example shown here above, the number of levels is 3. The first level is the top level and has only one hypothesis (partition) with equality between all the centers. The second level contains all partitions with one inequality (and one equality). Finally, the third level contains all partitions with two inequalities (no equalities).
\begin{definition}[Inclusion between hypotheses]
\label{def:InclusionConfigs}
We say that a hypothesis $H_1$ includes another hypothesis $H_2$, denoted using the usual notation $H_2\subset H_1$, if $H_2$ does not introduce any new equalities between the centers in comparison to $H_1$. In other words, $H_2\subset H_1$ if $R(H_2)\subset R(H_1)$ where $R(H)$ is the set of ranks of the centers when they follow the constraints defining the hypothesis $H$.
\end{definition}
The following result provides a way to deduce simultaneous confidence intervals at level for the ranks $1-\alpha$ based on testing the partitions at level $\alpha$. For the time being, we do not impose any condition on the statistical test apart from the fact that it must has a significance level equal to $\alpha$.
\begin{proposition}
The union of unrejected partitions at level $\alpha$ constitutes simultaneous confidence intervals for the ranks of the centers at level $1-\alpha$.
\end{proposition}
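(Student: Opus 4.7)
The plan is to argue directly from the disjointness of the partitioning, which is the distinctive feature that the excerpt has already emphasized.

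First, I would fix a true but unknown parameter vector $(\mu_1,\ldots,\mu_n)$ and observe that by construction of the elementary hypotheses $H_I$, the parameter space is covered by pairwise disjoint partitions indexed by valid rank configurations. Since the partitions are disjoint and their union is the whole parameter space, there exists a unique partition, call it $H^*$, such that $(\mu_1,\ldots,\mu_n) \in H^*$; equivalently, $H^*$ is the partition corresponding to the true ranking $R(\mu_1,\ldots,\mu_n)$.

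Next I would use the level-$\alpha$ property of the test applied to each partition. Because every other partition $H \ne H^*$ is false by disjointness, the only way to reject a true hypothesis is to reject $H^*$ itself. By assumption the test has significance level $\alpha$, so
\[
\mathbb{P}\bigl(\text{$H^*$ is rejected}\bigr) \le \alpha,
\]
and hence with probability at least $1-\alpha$, $H^*$ belongs to the collection $\mathcal{U}(\mathcal{Y})$ of unrejected partitions. This is precisely the FWER-control statement for partitioning procedures invoked in the preceding section; no multiplicity correction is needed because at most one partition is true.

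Then I would translate the event $H^* \in \mathcal{U}(\mathcal{Y})$ into a statement about ranks. Define $S(\mathcal{Y}) = \bigcup_{H \in \mathcal{U}(\mathcal{Y})} R(H)$, where $R(H)$ is the set of permutations compatible with $H$ as in Definition \ref{def:InclusionConfigs}. Since $R(H^*) = R(\mu_1,\ldots,\mu_n)$ and $H^* \in \mathcal{U}(\mathcal{Y})$ on an event of probability at least $1-\alpha$, we obtain
\[
\mathbb{P}\bigl(R(\mu_1,\ldots,\mu_n) \subset S(\mathcal{Y})\bigr) \ge 1-\alpha,
\]
which is the target simultaneous coverage statement. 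For the per-center confidence interval interpretation, for each $i$ I would take the union of the set-ranks assigned to $\mu_i$ by the permutations appearing in partitions in $\mathcal{U}(\mathcal{Y})$; simultaneous coverage of the rank vector immediately implies simultaneous coverage of every individual rank.

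The main obstacle is not the probabilistic bound, which is essentially a one-line consequence of disjointness, but rather the bookkeeping needed to show that the per-center union of ranks really forms an interval (a contiguous set of integers) so that the word ``confidence interval'' is justified. I expect this to follow from the fact that the elementary hypotheses were defined to correspond only to \emph{valid} rank sets, so that unions of rank sets of a given center across compatible partitions remain contiguous; if not, one can enlarge $S(\mathcal{Y})$ to the interval hull without losing coverage, as already noted in the discussion preceding the proposition.
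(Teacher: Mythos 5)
Your proposal is correct and follows essentially the same route as the paper: disjointness of the partitions means at most one is true, so an unadjusted level-$\alpha$ test of each yields FWER control, hence the true partition survives with probability at least $1-\alpha$ and the union of unrejected partitions covers the true ranks. The only difference is cosmetic — you re-derive the FWER bound from disjointness where the paper cites it from the preceding section, and you are slightly more explicit about why the resulting per-center rank sets are intervals.
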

\begin{proof}
Since the partitioning principle ensures that the FWER is below $\alpha$, we may write
\[\mathbb{P}\left(\text{Number of type I errors} \geq 1\right) \leq \alpha\]
which is equivalent to 
\[1-\mathbb{P}\left(\text{Number of type I errors} = 0\right) \leq \alpha.\]
Thus the probability that no false rejection is made exceeds $1-\alpha$ and all rejections made by testing them at level $\alpha$ are true rejections with joint probability at least $1-\alpha$. Denote $\cup_{i\in I}P_i$ the set of rejected partitions at level $\alpha$ and $\mu_T$ the true vector of centers. We can write
\[\mathbb{P}\left(\mu_T\notin \cup_{i\in I}P_i\right) \geq 1-\alpha.\]
Since the set of partitions $P_i$ covers the parameter space and they are disjoint, then if the vector of centers is not in any of the rejected partitions, it is a fortiori in the unrejected set of partitions. This entails that
\[\mathbb{P}\left(\mu_T\in \mathbb{R}^n\setminus \cup_{i\in I}P_i\right) = 1-\alpha.\]
Finally, because this union of unrejected partitions implies a set of simultaneous confidence intervals for the ranks of the centers, this set has a confidence level of at least $1-\alpha$.
\end{proof}

\section{The likelihood ratio test}\label{sec:LRTest}
The partitioning principle is a general method to correct for multiple testing and needs a local test to test its partitions. Choosing the test is independent from the partitioning principle. This means that a bad choice of the statistical test will lead to wide confidence intervals and this does not contradict with the optimality and superiority properties of the partitioning principle announced in Section \ref{sec:PartitionPrincip}. The partitions are (null) hypotheses of the form $\mu_i<\mu_j=\mu_l< ... <\mu_m = \mu_k$ with equality or strict inequality relating the centers. In the literature on ordered hypotheses, there is not yet a general result about an optimal test in this context. However, as stated by Bartholomew \cite{BartholomewPAVA} about the use of the likelihood ratio test: "Although it is not clear, in this case, whether the test derived has optimum properties, the method has a strong intuitive appeal and leads to a meaningful test".\\
The log-likelihood is written for the Gaussian model as
\[L(\mu_1,\cdots,\mu_n) = -\frac{n}{2}\log(2\pi) - \sum_{i=1}^n{\log(\sigma_i)} - \frac{1}{2}\sum_{i=1}^n{\frac{(y_i-\mu_i)^2}{\sigma_i^2}},\]
and the log-likelihood ratio statistic is given by
\[LR = -2\left(\max_{\mu_1,\cdots,\mu_n\in H_0}L(\mu_1,\cdots,\mu_n) - \max_{\mu_1,\cdots,\mu_n\in \mathbb{R}^n} L(\mu_1,\cdots,\mu_n)\right).\]
The maximum of the log-likelihood function under no constraint on the centers (that is under the alternative) is attained for $\mu_i = y_i$. Hence,
\[LR = \min_{\mu_1,\cdots,\mu_n\in H_0}\sum_{i=1}^n{\frac{(y_i-\mu_i)^2}{\sigma_i^2}}.\]
Under the null hypothesis, the maximum likelihood can be either calculated analytically for some hypotheses or iteratively using the Pool Adjacent Violators Algorithm (PAVA). This depends on the ordering of the observed sample whether it follows the ordering imposed by the null hypothesis or not.\\
The most simple case to calculate is the top hypothesis with only equalities between the centers. The maximum of the log-likelihood function is attained on the average of the observations. Let $H_0:H_{1,s}<\cdots<H_{k,n}$ where $H_{ij}=\{\mu_i=\cdots=\mu_j\}$. In other words, we write the hypothesis $H_0$ as a union of groups of centers with equality among them. It can be shown that if $y_1<y_2<\cdots<y_n$, then the maximum likelihood is attained on $H_0$ and the LR is given by
\[LR = \min_{\mu_1,\cdots,\mu_n\in H_0}\sum_{i=1}^n{\frac{(y_i-\mu_i)^2}{\sigma_i^2}} = \sum_{i,j}\sum_{l: \mu_l\in H{i,j}}{\frac{(y_l-\hat{\mu}_{H_{i,j}})^2}{\sigma_l^2}},\]
where
\[\hat{\mu}_{H_{i,j}} = \frac{1}{\sum_{l: \mu_l\in H_{i,j}}{\frac{1}{\sigma_l^2}}} \sum_{s: \mu_s\in H_{i,j}}{\frac{y_s}{\sigma_s^2}}.\]
Notice that if $j=i$, that is if the set $H_{i,j}$ contains but one center, the argument of the maximum $\mu_i$ is equal to the single observation $y_i$ and so all subsets $H_{i,j}$ with a single center do not appear in the calculus of the maximum likelihood. This means that we only average inside the sets with equality between the centers.\\
The case when the observed values do not follow the ordering imposed by the null hypothesis is treated using the pool adjacent violators algorithm (PAVA). The PAVA is an iterative algorithm introduced independently by several authors such as Ayer et al. \cite{AyerPAVA}, van Eeden \cite{vanEeden} and Bartholomew \cite{BartholomewPAVA}. More details about the algorithm can be found in the review paper of Leeuw et al. \cite{LeeuwReview} and the book of Barlow et al. \cite{BarlowBook}. Function \texttt{isoreg} from \texttt{STATS} package in the statistical program \texttt{R} calculates the optimum of a function under full order restriction using the PAVA by taking as an input the maximum likelihood estimator inside each set of centers related by an equality (here it is the average).\\

\section{A note on the probability distribution of the likelihood ratio statistic under order constraints}\label{sec:CriticalLevel}
Hypotheses of the form $\mu_1\leq\cdots\leq\mu_n$ were considered in the literature as null or alternative hypotheses. The most easy-to-understand papers providing the probability distribution for the likelihood ratio statistic for tests related to such hypothesis are \cite{BartholomewPAVA} and \cite{Robertson78} among others. More general approaches can be found in \cite{Kudo} and \cite{Shapiro}. More details can also be found in the books of Barlow \cite{BarlowBook} and Silvapulle and Sen \cite{SilvapulleBook}. The case when some of the inequalities are replaced by equalities is discussed in \cite{Shapiro} in a very general context. We will review some of these results for the sake of completeness of the topic. More details about the calculus are given in appendix for the simple case of three centers, namely $A,B$ and $C$.\\
\paragraph{Under equality of all the centers.} This is the most simple situation. In this case, the LR statistic is given by
\[LR(\mu_1=\cdots=\mu_n) = \frac{1}{\sum_1^n{1/\sigma_i^2}}\sum_{j=1}^n{\frac{1}{\sigma_j^2}\left(y_j - \frac{1}{\sum_1^n{1/\sigma_i^2}}\sum_{k=1}^n{y_k/\sigma_k^2}\right)^2}.\]
The LR statistic is distributed as a $\chi^2(n-1)$.\\
\paragraph{Under strict inequalities between the centers.} Without any prior knowledge of the data and considering each hypothesis (configuration) separately, all configurations sharing the same number of equalities and inequalities are equivalent. We will pick the first one and give the probability distribution of the corresponding log-likelihood ratio (LR) statistics.  Corollary 2.6 in \cite{Robertson78} states the result in the general context of $n$ centers
\[\mathbb{P}_{\mu_1=\cdots=\mu_n}\left(LR(\mu_1<\cdots<\mu_n)>\gamma\right) = \sum_{l = 1}^{n-1}{P(l,n)\mathbb{P}\left(\chi^2(n-l)>\gamma\right)},\]
where the numbers $\{P(l,n),l=1,\cdots,n\}$ sum up to 1 and are functions of the standard deviations $\sigma_1,\cdots,\sigma_n$. The numbers $P(l,n)$ were first mentioned in the paper of \cite{BartholomewPAVA} where the author calculated these numbers up to $n=5$ with exact formulas and conjectured a recursive formula for any $n$. The formula was proved after that by \cite{Miles59} who also calculated these numbers in the equal-$\sigma$ case that is when all the centers have the same standard deviation. They are given by
\[P(l,n) = \frac{|S_n^l|}{n!}\]
and the numbers $|S_n^l|$ for $l=1,...,n$ are the unsigned Stirling numbers of the first kind (See \cite{Miles59}). \\
The calculus of the numbers $P(l,n)$ in the general case of different values for the $\sigma_i$'s becomes very complicated as $n$ increases. The R package \texttt{ic.infer} provides functionalities to calculate these numbers and to perform tests on hypotheses of the form $\mu_1<\cdots<\mu_n$ (even when some of the inequalities are replaced by equalities), see \cite{IcInferPackage}.
\paragraph{Under a mixture of equalities and inequalities.} Since all configurations are equivalent prior to the knowledge of observations, it suffices to calculate the probability distribution for the hypothesis $A_1 = A_2=\cdots=A_d<A_{d+1}<\cdots<A_{n}$. It suffices to notice that in comparison to the case of only inequalities, this is the same. It is as if we merged the first $d$ individuals in only one but with $d-1$ degrees of freedom. Now using the result from Robertson \cite{Robertson78} and by adjusting for the equalities in the degrees of freedom of the $\chi^2$ distribution, we may write:
\begin{multline*}
\mathbb{P}_{A_1=\cdots=A_n}\left(LR(A_1 = A_2=\cdots=A_d<A_{d+1}<\cdots<A_{n}) > \gamma\right) = \\ \sum_{l = 0}^{n-d}{P(l+1,n-d+1)\mathbb{P}(\chi^2(n-l-1)>\gamma)},
\end{multline*}
The case of $n=3$ and $d=2$ gives directly formula \ref{eqn:ProbaLawEqIneq3}. See also \cite{Shapiro} for a general result.
 
Notice that it is not possible to calculate the probability distribution of the LR except under the least favorable case, that is when all the means are equal, because otherwise we get non-central $\chi^2$ distributions with unknown parameters. As we mentioned in Section \ref{sec:HowToRank}, if we do not reject the top hypothesis $\mu_1=\cdots=\mu_n$, we get trivial confidence intervals for the ranks. Thus, going further in the partition scheme does not have any sens unless we have already rejected the least favorable case under which we are calculating the critical level for all other null hypotheses. It is of further interest then to find a better critical value which does not assume the least favorable hypothesis. Notice that the least favorable case is the top hypothesis in our partitioning scheme and exists in every hypothesis related to the ranks of the centers. Thus it must be rejected in the beginning because otherwise the confidence intervals for the ranks of the centers are simply $[1,n]$.

\subsection{Adjusting the critical value}\label{subsec:AldoAdjust}
It is interesting to try and find a way to deal with the partitions without the need to suppose being under the worst case scenario, that is when all the centers are equal.
\begin{proposition}
In order to test a partition defined through $k$ equalities and $n-k$ inequalities constraints using the LR test, it suffices to compare the LR statistic with the $\chi^2$ quantile of order $1-\alpha$ and with $k$ degrees of freedom.
\end{proposition}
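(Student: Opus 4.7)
The plan is to exploit the Gaussian geometry via a projection-based decomposition of the LR statistic. Let $V\subset\mathbb{R}^n$ denote the affine subspace defined by the $k$ equality constraints of the partition $H_0$, so $\dim V = n-k$, and let $C$ denote the polyhedral cone carved out by the strict inequalities, so that $H_0 = V \cap C$. For any $\mu\in H_0 \subset V$, orthogonality of the $\Sigma$-weighted projection $P_V y$ onto $V$ gives $\|y-\mu\|_\Sigma^2 = \|y - P_V y\|_\Sigma^2 + \|P_V y - \mu\|_\Sigma^2$, and minimising over $\mu\in H_0$ yields the key decomposition
\[LR \;=\; \|y - P_V y\|_\Sigma^2 \;+\; \min_{\mu\in H_0}\|P_V y - \mu\|_\Sigma^2.\]

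The first summand is the standard Gaussian linear-model residual: under any $\mu^\star\in V$ it is distributed as $\chi^2(k)$ because $V$ has codimension $k$ in $\mathbb{R}^n$ and the equality constraints are linear. This is the parametric contribution that produces exactly the $k$ degrees of freedom claimed in the statement. The second summand is the isotonic correction obtained by running PAVA on the weighted group means inside $V$; it vanishes precisely when these group averages already respect the partition's ordering, in which case $LR = \|y - P_V y\|_\Sigma^2 \sim \chi^2(k)$ and the test clearly controls level $\alpha$.

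The next step is to show that the isotonic correction does not inflate the tail of $LR$ beyond the $\chi^2(k)$ quantile at level $\alpha$. I would argue this by leveraging that the top hypothesis $\mu_1=\cdots=\mu_n$ has already been rejected (as emphasised at the end of Section~\ref{sec:CriticalLevel}), so the relevant configurations lie strictly inside $H_0$, where the probability of a PAVA merge is controlled. When a merge does occur, the restricted minimiser coincides with that of a strictly finer partition carrying additional equalities; by disjointness of the partitions in the partitioning scheme, such sample configurations are handled by a different, neighbouring local test so the FWER budget is not double-counted.

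The main obstacle will be precisely this last step: since the isotonic correction term is non-negative, $LR$ pointwise stochastically dominates $\|y-P_V y\|_\Sigma^2 \sim \chi^2(k)$, so a naive use of the $\chi^2_{1-\alpha}(k)$ quantile looks anti-conservative when taken in isolation. Closing the gap requires either a coupling of $LR$ with the LR of the coarser partition obtained after the PAVA merge, or a conditioning on the sign pattern of $P_V y$ within $V$ that decomposes the rejection event into independent chi-squared pieces whose mixture weights reassemble into exactly $\chi^2(k)$ once the worst-case equality scenario is excluded. This is the delicate point of the proof, and it is where the assumption that the least favourable top hypothesis has already been rejected is indispensable.
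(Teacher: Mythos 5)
Your decomposition is in substance the same one the paper uses: the term $\left\|y-P_V y\right\|_\Sigma^2$ is exactly what the paper calls $LR(U_2)$, the weighted residual sum of squares within the equality blocks, and its exact $\chi^2(k)$ law under the null is the heart of the matter. The genuine gap is that you stop precisely where the proof has to be finished, and the devices you gesture at for finishing it --- a coupling with the coarser partition, conditioning on sign patterns, or invoking the prior rejection of the top hypothesis --- are not what closes it; the last of these is irrelevant here, since it belongs to the separate refinement of paragraph \ref{subsec:AdjustCritVal}, whereas the present proposition is a statement about the local test alone and must hold without it. The actual resolution is far more elementary than anything you propose. Let $U$ be the hypothesis obtained from $H$ by keeping the $k$ equalities and discarding the inequality constraints. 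Since $H\subset U$, any level-$\alpha$ test of $U$ is automatically a level-$\alpha$ test of $H$; so one takes the local test of the partition to be ``reject iff $LR(U)>\chi^2_{1-\alpha}(k)$'', whose size is exactly $\alpha$ under every $\mu\in H$ because $LR(U)$ is an exact $\chi^2(k)$ pivot there. No stochastic-domination or worst-case argument is needed, because the isotonic correction term never enters the statistic actually used. Equivalently, by Proposition \ref{lemm:ShortcutCorrectHyp} the procedure only ever evaluates the LR on partitions whose ordering agrees with the empirical one, and for those your second summand vanishes identically, so $LR(H)=LR(U)$ on every sample point where the test is carried out.

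You are right, however, that something is delicate at exactly the spot you flag: if one insisted on using the full $LR(H)$, PAVA correction included, as the test statistic on all data configurations, then the pointwise inequality $LR(H)\geq LR(U)$ would make the $\chi^2_{1-\alpha}(k)$ quantile anti-conservative, and the paper's own displayed inequality (\ref{eqn:CoservativeBoundChisq}) is written in the direction opposite to what its stated premise $LR(H_2)\geq LR(U_2)$ implies. So your instinct that the naive reading does not go through is sound; but the repair is the inclusion-plus-exact-pivot argument above, not a new probabilistic estimate, and without supplying it your proof is incomplete.
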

\begin{proof}
Suppose we want to deal with the hypothesis $H_2 = \{\mu_1<\mu_2=\cdots=\mu_n\}$. The partitioning scheme includes also the hypothesis $J_2=\{\mu_1>\mu_2=\cdots=\mu_n\}$. Consider the union of these two hypotheses, that is $U_2 = H_2\cup J_2 = \{\mu_1\neq\mu_2=\cdots=\mu_n\}$. If we have a valid test for the set $U_2$, then we have a valid test for both $H_2$ and $J_2$. Indeed, since $H_2\subset U_2$, then $LR(H_2)\geq LR(U_2)$ and thus
\begin{equation}
\mathbb{P}_{H_2}(LR(H_2)>\gamma) \leq \mathbb{P}_{H_2}(LR(U_2)>\gamma).
\label{eqn:CoservativeBoundChisq}
\end{equation}
Inside the likelihood ratio LR$(U_2)$, the sum is minimized under $U_2$ inside $U_2$ whatever the relative position of $y_1$ with respect to the weighted average of $y_2,\cdots,y_n$, and the corresponding likelihood ratio equals
\[LR(U_2) = \sum_{i=2}^n{\frac{1}{\sigma_i^2}\left(y_i - \frac{1}{\sum_2^n{\frac{1}{\sigma_j^2}}}\sum_{k=2}^n{y_k/\sigma_k^2}\right)^2}.\]
The probability distribution of $LR(U_2)$ under the hypothesis $H_2$ is a $\chi^2(n-2)$. Hence, in order to test the hypothesis $H_2$ at level $\alpha$, using (\ref{eqn:CoservativeBoundChisq}) it suffices to choose $\gamma$ as the quantile of a $\chi^2(n-2)$ at level $1-\alpha$. Notice that the new critical value is calculated under the null hypothesis and not under the least favorable distribution. The same result applies for the hypothesis $J_2$.\\
This idea is easily generalized to any hypothesis in the partitioning scheme. In other words, in order to test some hypothesis $H$, it suffices to calculate the critical value of a $\chi^2$ with a number of degrees of freedom equal to $n-l-1$ where $l$ corresponds to the number of inequalities in the configuration defining the tested hypothesis $H$.
\end{proof}
The new critical value is now calculated under the null hypothesis and not under the least favorable distribution. Besides, in comparison to the result of the previous paragraph, the new critical value is a quantile of a $\chi^2(n-2)$ instead of the quantile of a mixture of a $\chi^2(n-2)$ and a $\chi^2(n-1)$, which means we got a lower critical value than before and thus a more powerful test. It is very important to notice that the improvement here does not only concern the power of the test. The calculation of the critical level based on the results presented in the previous paragraph is very complicated in the not-equal-$\sigma$ case, because we can no longer use the result of \cite{Miles59}. According to  Gr\"omping (\cite{IcInferPackage}), the calculus takes 9 seconds when $n=10$ and explodes to 1265 seconds for $n=15$. One can imagine what may happen with $n=50$ or higher. Moreover, even in the equal-$\sigma$ case, calculating the Stirling numbers for $n>175$ produces and overflow in the statistical program \texttt{R}, see function \texttt{Stirling1} from package \texttt{copula}. On the other hand, calculating the quantiles of $\chi^2$ distribution is done instantly for any (reasonable) value of $n$.

\subsection{Type I error in terms of the confidence intervals for the ranks: further adjustments on the critical value}\label{subsec:AdjustCritVal}
When we test the partitions, as long as we use any of the critical values of the two previous paragraphs, either the $\chi^2(n-l-1)$ or the mixture of Chi squares, we are protecting against type I error from a point of view of the partitioning itself without taking into account the objective behind all the testing which is the confidence intervals for ranks. This results in more conservative tests. A type I error from a point of view of the partitioning happens when we reject some partition $H$ although the true set of centers $(\mu_1,\cdots,\mu_n)$ is inside it. A type I error related to the ranks happens when we falsely reject not only the partition containing the true set of centers, but all hypotheses which imply larger confidence intervals for the ranks (hypotheses which has at least the same equalities, see Definition \ref{def:InclusionConfigs}). Thus, we should adjust the critical value for the hypotheses so that the protection of type I error is for the ranks and not for the partitions.\\
We start from the top hypothesis with equalities between all the centers. The top hypothesis is the first one to be tested and it includes all other hypotheses since it implies the trivial confidence intervals. Testing the hypothesis with all equalities between the individuals is simply done by comparing the LR with the quantile of a $\chi^2(n-1)$. Now let us take one of the hypotheses from the second level of the partitioning scheme. Consider the hypothesis $\mu_1=\cdots=\mu_{n-1}<\mu_n$. Let $\gamma_{n-1}$ be the quantile of order $1-\alpha$ of the $\chi^2(n-1)$. Let $\gamma>0$ be some real number to be determined later on. We falsely reject the ranks implied by the hypothesis $\mu_1=\cdots=\mu_{n-1}<\mu_n$ if we reject the top hypothesis at the critical level $\gamma_{n-1}$ and the current hypothesis at the critical level $\gamma$. The value of $\gamma$ must ensure that the probability that the two events happen in the same time is at most $\alpha$, that is
\[\mathbb{P}\left(LR(\mu_1=\cdots=\mu_{n-1}<\mu_n) > \gamma ,\; LR(\mu_1=\cdots=\mu_n)>\gamma_{n-1}\right) \leq \alpha.\]
It is interesting to notice that by taking $\gamma = \gamma_{n-2}$, we have
\begin{multline*}
\mathbb{P}\left(LR(\mu_1=\cdots=\mu_{n-1}<\mu_n) > \gamma_{n-2} ,\; LR(\mu_1=\cdots=\mu_n)>\gamma_{n-1}\right) \leq \\ \mathbb{P}\left(LR(\mu_1=\cdots=\mu_{n-1}<\mu_n) > \gamma_{n-2}\right) \leq \alpha.
\end{multline*}
Thus, the proposed adjustment from paragraph \ref{subsec:AldoAdjust} is already included in the new definition for the critical value. Moreover, it shows clearly that by following this lead, it is possible to find a lower critical value than $\gamma_{n-2}$.\\
The logic is to decompose this probability into two cases according to the relative position of $y_n$ with respect to the weighted average of the observations $y_1,\cdots,y_{n-1}$. 
\begin{multline*}
\mathbb{P}\left(LR(\mu_1=\cdots=\mu_{n-1}<\mu_n) > \gamma ,\; LR(\mu_1=\cdots=\mu_n)>\gamma_{n-1}\right) = \\ \mathbb{P}\left(LR(\mu_1=\cdots=\mu_{n-1}<\mu_n) > \gamma ,\; LR(\mu_1=\cdots=\mu_n)>\gamma_{n-1}\left| y_n > \frac{1}{\sum{\frac{1}{\sigma_j^2}}}\sum_{k=1}^{n-1}{ y_k/\sigma_k^2}\right.\right)\times\\
\mathbb{P}\left(y_n > \frac{1}{\sum{\frac{1}{\sigma_j^2}}}\sum_{k=1}^{n-1}{ y_k/\sigma_k^2}\right) + \\
 \mathbb{P}\left(LR(\mu_1=\cdots=\mu_{n-1}<\mu_n) > \gamma ,\; LR(\mu_1=\cdots=\mu_n)>\gamma_{n-1}\left| y_n \leq \frac{1}{\sum{\frac{1}{\sigma_j^2}}}\sum_{k=1}^{n-1}{ y_k/\sigma_k^2}\right.\right) \times \\ \mathbb{P}\left(y_n \leq \frac{1}{\sum{\frac{1}{\sigma_j^2}}}\sum_{k=1}^{n-1}{ y_k/\sigma_k^2}\right).
\end{multline*}
Notice that when $y_n \leq \frac{1}{\sum{\frac{1}{\sigma_j^2}}}\sum_{k=1}^{n-1}{ y_k/\sigma_k^2}$, then the minimum in the likelihood ratio under the null hypothesis will be attained on the border of the the set $\{\mu_1=\cdots=\mu_{n-1}<\mu_n\}$. Thus $LR(\mu_1=\cdots=\mu_{n-1}<\mu_n) = LR(\mu_1=\mu_2=\cdots=\mu_n)$. This implies that if $\gamma\leq\gamma_{n-1}$, the second term in the previous display simplifies. 
\begin{multline*}
\mathbb{P}\left(LR(\mu_1=\cdots=\mu_{n-1}<\mu_n) > \gamma ,\; LR(\mu_1=\cdots=\mu_n)>\gamma_{n-1}\right) = \\ \mathbb{P}\left(\sum_{i=1}^{n-1}{\frac{1}{\sigma_i^2}\left(y_i - \frac{1}{\sum{\frac{1}{\sigma_j^2}}}\sum_{k=1}^{n-1}{ y_k/\sigma_k^2}\right)^2} > \gamma ,\;\sum_{i=1}^{n}{\frac{1}{\sigma_i^2}\left(y_i - \frac{1}{\sum{\frac{1}{\sigma_j^2}}}\sum_{k=1}^{n-1}{ y_k/\sigma_k^2}\right)^2}>\gamma_{n-1}\right)\\ 
\times \mathbb{P}\left(y_n > \frac{1}{\sum{\frac{1}{\sigma_j^2}}}\sum_{k=1}^{n-1}{ y_k/\sigma_k^2}\right) \\
+ \mathbb{P}\left(\sum_{i=1}^{n}{\frac{1}{\sigma_i^2}\left(y_i - \frac{1}{\sum{\frac{1}{\sigma_j^2}}}\sum_{k=1}^{n-1}{ y_k/\sigma_k^2}\right)^2}>\gamma_{n-1}\right)
\mathbb{P}\left(y_n \leq \frac{1}{\sum{\frac{1}{\sigma_j^2}}}\sum_{k=1}^{n-1}{ y_k/\sigma_k^2}\right)
\end{multline*}
Although the second term contains the probability of rejecting the top hypothesis at level $\alpha$, it is calculated under the hypothesis $\mu_1=\cdots=\mu_{n-1}<\mu_n$, and hence it does not evaluate to $\alpha$. Since the random variable $\sum_{i=1}^{n}{\frac{1}{\sigma_i^2}\left(y_i - \frac{1}{\sum{\sigma_j^2}}\sum_{k=1}^{n-1}{ y_k/\sigma_k^2}\right)^2}$ has a non-central $\chi^2$ distribution under the null hypothesis $\mu_1=\cdots=\mu_{n-1}<\mu_n$ with unknown centrality parameter which depends on the true position of the centers from each others, it \emph{not} is impossible to evaluate the second term. It is however possible to upper-bound it by the corresponding worst case, that is the maximum over all the possible values for the centers inside the null hypothesis $\mu_1=\cdots=\mu_{n-1}<\mu_n$. \\
We will explore this idea in the case of equal standard deviations of the centers in order to keep formulas clearer. We need at first the following lemma.
\begin{lemma}[Theorem 5, \cite{Apostol}] \label{lemm:SumOfSquares}
Let $y_1,\cdots,y_n$ and $\sigma_1,\cdots,\sigma_n$ be real numbers. Let $n_1\in\{1,\cdots,n\}$. 
\begin{multline*}
\sum_{i=1}^n{\left(y_i - \frac{1}{\sum_{j=1}^n{\sigma_j^2}}\sum_{k=1}^n{\sigma_k^2y_k}\right)^2} = \sum_{i=1}^{n_1}{\left(y_i - \frac{1}{\sum_{j=1}^{n_1}{\sigma_j^2}}\sum_{k=1}^{n_1}{\sigma_k^2y_k}\right)^2} + \sum_{i=n_1+1}^{n}{\left(y_i - \frac{1}{\sum_{j = n_1+1}^n{\sigma_j^2}}\sum_{k=n_1+1}^{n}{\sigma_k^2y_k}\right)^2} \\
+ \frac{\sum_{j=1}^{n_1}{\sigma_j^2}\sum_{j=n_1+1}^{n}{\sigma_j^2}}{\sum_{j=1}^{n_1}{\sigma_j^2}+\sum_{j=n_1+1}^{n}{\sigma_j^2}}\left(\frac{1}{\sum_{j=1}^{n_1}{\sigma_j^2}}\sum_{k=1}^{n_1}{\sigma_k^2y_k}-\frac{1}{\sum_{j=n_1+1}^n{\sigma_j^2}}\sum_{k=n_1+1}^{n}{\sigma_k^2y_k}\right)^2.
\end{multline*}
\end{lemma}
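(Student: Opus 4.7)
The plan is to view this as a weighted Pythagorean decomposition (the two-group ANOVA identity) with group weights $W_1 = \sum_{j=1}^{n_1}\sigma_j^2$ and $W_2 = \sum_{j=n_1+1}^n\sigma_j^2$. Denote by $\bar y$, $\bar y_1$, $\bar y_2$ the three weighted means appearing in the statement, and by $G_1=\{1,\dots,n_1\}$, $G_2=\{n_1+1,\dots,n\}$ the two index groups. The single arithmetic fact I will lean on is $(W_1+W_2)\bar y = W_1\bar y_1 + W_2\bar y_2$, which gives $\bar y_1-\bar y = \frac{W_2}{W_1+W_2}(\bar y_1-\bar y_2)$ and $\bar y_2-\bar y = -\frac{W_1}{W_1+W_2}(\bar y_1-\bar y_2)$.

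First I would split the LHS at index $n_1$ and, inside each group $G_k$, expand $(y_i-\bar y)^2 = (y_i-\bar y_k)^2 + 2(y_i-\bar y_k)(\bar y_k-\bar y) + (\bar y_k-\bar y)^2$. The three resulting pieces are respectively the within-group squared deviations that should match the first two sums on the RHS, a cross term proportional to $\sum_{i\in G_k}(y_i-\bar y_k)$, and a constant (in $i$) between-group square summed over the group.

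The main obstacle is making the cross terms cancel. The only orthogonality at my disposal is the weighted normal equation $\sum_{i\in G_k}\sigma_i^2(y_i-\bar y_k)=0$ built into the definition of $\bar y_k$. This kills the cross term only when each squared deviation is itself multiplied by $\sigma_i^2$, i.e.\ when the identity is read as a \emph{weighted} sum-of-squares decomposition. That reading is what the application in Section~\ref{subsec:AdjustCritVal} actually needs, since every $(y_i-\cdot)^2$ there appears divided by $\sigma_i^2$. I would therefore propose the proof in that weighted form; for the literal unweighted LHS the cross term does not vanish in general (a one-line counterexample with $n=2$, $\sigma_1\neq\sigma_2$ suffices), so I read the squared deviations on both sides as carrying the weights $\sigma_i^2$ to match the weighted means — this is really the only place in the argument where care is needed.

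Once the cross terms are eliminated, the surviving between-group contribution is $W_1(\bar y_1-\bar y)^2+W_2(\bar y_2-\bar y)^2$. Substituting the two expressions for $\bar y_k-\bar y$ from the first paragraph and collecting yields $\frac{W_1W_2^2+W_2W_1^2}{(W_1+W_2)^2}(\bar y_1-\bar y_2)^2=\frac{W_1W_2}{W_1+W_2}(\bar y_1-\bar y_2)^2$, matching the last term of the lemma. Beyond the weighted normal equation and this single rearrangement, no further machinery is required; an induction on the number of groups would extend the identity to arbitrarily many blocks if needed later.
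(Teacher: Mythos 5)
Your algebra is correct, but be aware there is no internal proof to compare against: the paper imports Lemma \ref{lemm:SumOfSquares} wholesale as Theorem 5 of \cite{Apostol} and never proves it. Your route --- split the sum at $n_1$, expand $(y_i-\bar y)^2$ around each group mean, cancel the cross terms with the weighted normal equations $\sum_{i\in G_k}\sigma_i^2(y_i-\bar y_k)=0$, then reduce $W_1(\bar y_1-\bar y)^2+W_2(\bar y_2-\bar y)^2$ to $\frac{W_1W_2}{W_1+W_2}(\bar y_1-\bar y_2)^2$ via $(W_1+W_2)\bar y=W_1\bar y_1+W_2\bar y_2$ --- is the standard two-group weighted ANOVA (parallel-axis) argument, and every step checks out.

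The most valuable part of your proposal is the repair of the statement, and you are right about it. As printed, with unweighted squared deviations taken about $\sigma^2$-weighted means, the identity is false: with $n=2$, $n_1=1$, $\sigma_1^2=1$, $\sigma_2^2=2$, the left side is $\frac{5}{9}(y_1-y_2)^2$ while the right side is $\frac{2}{3}(y_1-y_2)^2$, and your diagnosis that the cross term $2(\bar y_k-\bar y)\sum_{i\in G_k}(y_i-\bar y_k)$ survives because $\bar y_k$ is a weighted rather than arithmetic mean is exactly the failure mode. In fact the printed version fails even in the equal-$\sigma$ case unless $\sigma^2=1$, since its between-group coefficient evaluates to $\frac{n_1(n-n_1)\sigma^2}{n}$ instead of the correct $\frac{n_1(n-n_1)}{n}$. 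Your weighted reading is also the one the paper actually invokes: the display in paragraph \ref{subsec:AdjustCritVal} carries a factor $\frac{1}{\sigma^2}$ on every term, i.e.\ it is the weighted identity with weights $w_i=1/\sigma_i^2$ (all equal), whose between-group coefficient $\frac{W_1W_2}{W_1+W_2}=\frac{n-1}{n\sigma^2}$ matches the $\frac{1}{\sigma^2}\frac{n-1}{n}$ appearing there exactly; the literal lemma with $\sigma_i=\sigma$ substituted would not produce that factor. Two minor points: the hypothesis should require the group weight sums to be nonzero (``real numbers'' is too weak for the denominators), and your closing observation is apt --- one can compress the proof by applying the one-group parallel-axis identity $\sum_{i\in G}w_i(y_i-c)^2=\sum_{i\in G}w_i(y_i-\bar y_G)^2+W_G(c-\bar y_G)^2$ with $c=\bar y$ to each group, though that identity is itself established by the same cross-term computation you perform, so nothing essential is gained beyond brevity.
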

Using this lemma and setting $\sigma_i=\sigma$, we may write
\[\frac{1}{\sigma^2}\sum_{i=1}^n{\left(y_i-\frac{1}{n}\sum{y_i}\right)^2} = \frac{1}{\sigma^2}\sum_{i = 1}^{n-1}{(y_i - \frac{1}{n-1}\sum_{i=1}^{n-1}{y_i})^2} + \frac{1}{\sigma^2}\frac{n-1}{n}\left(y_n - \frac{1}{n-1}\sum_{i=1}^{n-1}{y_i}\right)^2.\]
Now, the non centrality is separated from the main sum of squares. The first term in the RHS has a distribution of $\chi^2(n-2)$ under the null hypothesis. The second term in the RHS is a $\chi^2_{(\mu_n-\mu_1)^2}(1)$, that is a non-central $\chi^2$ distribution with 1 degrees of freedom and a non-centrality parameters equal to $(\mu_n-\mu_1)^2$. Moreover, these two $\chi^2$ random variables are independent from each other due to Cochran's theorem and to the fact that $y_n$ is independent from the other observations and does not appear in either $\frac{1}{n-1}\sum_{i=1}^{n-1}{y_i}$ or in the first term in the RHS. We now have
\begin{eqnarray*}
\mathbb{P}\left(\frac{1}{\sigma^2}\sum_{i=1}^n{\left(y_i-\frac{1}{n}\sum{y_i}\right)^2}>\gamma_{n-1}\right) & = &
\mathbb{P}\left(\frac{1}{\sigma^2}\sum_{i = 1}^{n-1}{\left(y_i - \frac{1}{n-1}\sum_{i=1}^{n-1}{y_i}\right)^2}\right. \\
& & \qquad \left.+ \frac{1}{\sigma^2}\frac{n-1}{n}\left(y_n - \frac{1}{n-1}\sum_{i=1}^{n-1}{y_i}\right)^2>\gamma_{n-1}\right) \\ 
& = & \int_{x+y>\gamma_{n-1}}{f_{\chi^2(n-2)}(x)f_{\chi^2_{(\mu_n-\mu_1)^2}(1)}(y)dydx} \\
& = & \int_{x=0}^{\infty}{f_{\chi^2(n-2)}(x)\left[1-\mathbb{F}_{\chi^2_{(\mu_n-\mu_1)^2}(1)}(\gamma_{n-1}-x)\right]dx}.
\end{eqnarray*}
Similarly, we have
\begin{multline*}
\mathbb{P}\left(\frac{1}{\sigma^2}\sum_{i=1}^{n-1}{\left(y_i-\frac{1}{n-1}\sum{y_i}\right)^2}>\gamma,\frac{1}{\sigma^2}\sum_{i=1}^n{\left(y_i-\frac{1}{n}\sum{y_i}\right)^2}>\gamma_{n-1}\right)  = \\ \int_{x=\gamma}^{\infty}{f_{\chi^2(n-2)}(x)\left[1-\mathbb{F}_{\chi^2_{(\mu_n-\mu_1)^2}(1)}(\gamma_{n-1}-x)\right]dx}.
\end{multline*}
Recall that we are looking for $\gamma$ such that
\begin{multline}
\frac{1}{2}\mathbb{P}\left(\frac{1}{\sigma^2}\sum_{i=1}^{n-1}{\left(y_i-\frac{1}{n-1}\sum{y_i}\right)^2}>\gamma,\frac{1}{\sigma^2}\sum_{i=1}^n{\left(y_i-\frac{1}{n}\sum{y_i}\right)^2}>\gamma_{n-1}\right) \\ + \frac{1}{2}\mathbb{P}\left(\frac{1}{\sigma^2}\sum_{i=1}^n{\left(y_i-\frac{1}{n}\sum{y_i}\right)^2}>\gamma_{n-1}\right) \leq \alpha.
\label{eqn:AdjustCritValEqSig}
\end{multline}
The sum of the two probabilities increases as the difference $\mu_n - \mu_1$ increases. Notice that under the null hypothesis $\mu_1=\cdots=\mu_{n-1}<\mu_n$, if we suppose that $\mu_n - \mu_1 = 0.587$ and for $n = 10$, then by maximizing over the difference $\mu_n-\mu_1$ inside $[0,0.587]$, the value of $\alpha$ that results in an equality in (\ref{eqn:AdjustCritValEqSig}) is approximately the quantile of $\chi^2(n-2)$. This corresponds to our first adjustment given in paragraph \ref{subsec:AldoAdjust}. If we can guarantee that $\mu_n - \mu_1 < 0.587$, then an adjustment to the critical value of the quantile of $\chi^2(n-2)$ can be made. The lower the upper bound on the difference $\mu_n - \mu_1$ can be ensured, the better the adjustment on the critical value can be made. A similar but more complicated treatment can be done on hypotheses from lower levels (with higher number of inequalities) to deduce another upper bound on the differences between the centers. This adjustment appears to be relevant only if the centers are close enough from each others. Otherwise, no adjustment can be made by maximizing over the difference $\mu_n-\mu_1$ simply because the value of the second probability in the LHS of (\ref{eqn:AdjustCritValEqSig}) starts to exceed $2\alpha$.

\section{Some shortcuts and practical issues}\label{sec:shortcuts}
The main idea in the partitioning principle is to generate all possible simple disjoint hypotheses such that null hypothesese of interest are included inside unions of some of them. The number of partitions is generally exponentially related to the number of observations and depends of course on how fine is the proposed partitioning. In order to illustrate the complex problem we are treating and its difficulty, the following lemma gives us an idea about the number of hypotheses to be tested so that we cover all the partitions described in Section \ref{sec:HowToRank}.
\begin{lemma} 
The number of correctly ordered hypothesis w.r.t. the empirical ordering induced by the observations is equal to $2^{n-1}$. Moreover, the total number of hypotheses in the partitioning scheme is 
\begin{equation}
\sum_{l=2}^{n} C_n^l (n-l)! + n! = \sum_{l=1}^{n}{\frac{n!}{l!}}
\label{eqn:NbConfigPartition}
\end{equation}
where $0!=1$. 
\end{lemma}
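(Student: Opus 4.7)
Plan for the proof.

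\textbf{Part 1 (correctly ordered hypotheses).} The assumption that the $y_i$ are drawn from continuous Gaussian distributions means that almost surely the empirical ordering is a single strict permutation. Without loss of generality I would relabel so that $y_1<y_2<\cdots<y_n$. A hypothesis in the partitioning scheme is ``correctly ordered with respect to the empirical ordering'' precisely when it can be written as $\mu_1\,R_1\,\mu_2\,R_2\,\cdots\,R_{n-1}\,\mu_n$ with each $R_i\in\{=,<\}$ (no swaps relative to the empirical order). The $n-1$ adjacent relations are chosen independently from a two-element set, yielding exactly $2^{n-1}$ hypotheses. This part is essentially bookkeeping once the parameterisation is set up.

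\textbf{Part 2 (total count).} I would identify each hypothesis with an ordered set partition of $\{1,\ldots,n\}$, where a block represents a maximal group of tied centres and the blocks are totally ordered by $<$. The algebraic equivalence in (\ref{eqn:NbConfigPartition}) is immediate since $\binom{n}{l}(n-l)!=n!/l!$, so the real work is the combinatorial identification of the left-hand sum with the number of hypotheses. My plan is to stratify. First, peel off the $n!$ strict orderings (all blocks of size one). For the remaining hypotheses, introduce a parameter $l\in\{2,\ldots,n\}$ indexing a distinguished tied block of size $l$: the factor $\binom{n}{l}$ chooses which centres constitute this block, while $(n-l)!$ arranges the remaining $n-l$ centres and the chosen block into a valid ordered partition (using the fact that, once the block and its members are fixed, the rest of the partition is determined by a single permutation of the $n-l$ singletons together with the insertion point of the block, up to the shortcut that collapses symmetric choices).

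\textbf{Main obstacle.} The hard step is making the stratification for Part 2 rigorous: one must define the ``distinguished tied block'' in a way that every hypothesis is counted exactly once. Naively, if a hypothesis contains several equality groups, a given $l$ could match more than one block. I would address this by either (i) choosing the distinguished block canonically, for instance the top block or the unique maximal one under some priority rule compatible with the shortcut of Section \ref{sec:shortcuts}, or (ii) proving the identity by generating-function or recursion, showing both sides satisfy the same recursion in $n$ with matching initial condition. I would sanity-check either approach on $n=3,4$ against direct enumeration before committing. If the natural enumeration overcounts relative to (\ref{eqn:NbConfigPartition}), that would indicate the scheme already incorporates a symmetry-reducing shortcut from Section \ref{sec:shortcuts}, and the proof should be phrased in terms of equivalence classes of partitions under that shortcut rather than raw ordered set partitions.
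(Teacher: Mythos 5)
Your Part 1 is fine and coincides with the paper's argument: with $y_1<\cdots<y_n$, a correctly ordered hypothesis is determined by an independent choice of $=$ or $<$ in each of the $n-1$ adjacent slots, giving $2^{n-1}$. Part 2, however, is a plan rather than a proof, and the step you yourself flag as the main obstacle --- defining the ``distinguished tied block'' so that hypotheses with several equality groups are counted exactly once --- is precisely the step that cannot be completed, because the identity you are asked to prove is false for the partitioning scheme as the paper defines it. Your proposed sanity check exposes this immediately: for $n=3$ the scheme enumerated in Section \ref{sec:HowToRank} contains $13$ hypotheses (all ordered set partitions of a $3$-set), whereas $\sum_{l=1}^{3}3!/l!=6+3+1=10$; for $n=4$ there are $75$ ordered set partitions against $41$ from the formula. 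A stratification by a single distinguished block can only reach hypotheses with at most one non-singleton block (it never produces, e.g., $\mu_1=\mu_2<\mu_3=\mu_4$), and even for those the arrangement factor should be $(n-l+1)!$ rather than $(n-l)!$, since the block itself must be ordered among the $n-l$ remaining singletons. The correct total is the ordered Bell (Fubini) number $\sum_{k=1}^{n}k!\,S(n,k)$, with $S(n,k)$ the Stirling numbers of the second kind.

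For comparison, the paper's own proof proceeds level by level but suffers from the same two defects: at each level it only allows one group of tied centers (choose an $l$-subset to merge, then permute the resulting $n-l+1$ objects), so it misses all configurations with two or more non-singleton blocks, and its intermediate counts (e.g.\ $(n-1)!\,C_n^2$ for the one-equality level) do not reduce to the terms $C_n^l(n-l)!$ appearing in (\ref{eqn:NbConfigPartition}). So your instinct to check $n=3,4$ against direct enumeration before committing is exactly right; carrying it out shows that the statement needs to be corrected (to the Fubini number) rather than proved, and that only the $2^{n-1}$ claim --- the one actually used by the algorithms after Proposition \ref{lemm:ShortcutCorrectHyp} --- survives as stated.
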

\begin{proof}
Since we have for each relation between two consecutive individuals only two options $=$ or $<$, then the overall number of correctly ordered hypotheses is $2^{n-1}$.\\
In order to do the calculus for the whole partitioning, we calculate the number of hypotheses in each level and then sum them up, that is the hypotheses having the same number of inequalities between the centers. Start with the bottom level with only inequalities and then go up level by level by inserting equalities. 
\begin{enumerate}
\item The bottom level with only inequalities: We consider that the individuals are ordered in each hypothesis in an ascending order so that we only need to think about the number of permutations of the individuals in order to cover all cases. Thus the number of hypotheses at the bottom level is equal to the number of permutations in the set $\{1,\cdots,n\}$ with one orbit. This is equal to $n!$.
\item Level $n-1$ with only one equality: We should consider the number of possibilities of equality between two individuals and then multiply it by the number of configurations of inequalities for each equality position. Repetitions are not accepted. The number of possibilities for equality between the first individual and another one from the list is equal to $n-1$. The remaining number of possibilities for the second one is $n-2$. Thus, the whole number of possibilities without repetitions is equal to $\sum_{k=1}^{n-1}{(n-k)}$. It is the same as counting the number of subsets of length 2 from the whole set of individuals, and this is equal to $C_n^2=n(n-1)/2$. Now that two elements are merged into one with an equality, we need to count down all permutations in a set of $n-1$ elements as in the previous point. This gives $(n-1)!$ configurations. Hence, the number of configurations in level $n-1$ is equal to
\[(n-1)!\sum_{k=1}^{n-1}{(n-k)} = \frac{(n-1)n!}{2}.\]
\item Level $n-2$ with two equalities. This time we need to choose for two equalities. We are then counting the number of subsets of length 3 inside $\{1,\cdots,n\}$ which yields $C_n^3$. Now that two equalities are constructed, the remaining number of elements is $n-2$ and the number of permutations (in order to construct the inequalities) is equal to $(n-2)!$. Hence, the number of configurations in level $n-2$ is equal to
\[(n-2)!C_{n}^3\]
\item For level $n-3$, a similar argument shows that the overall number of configurations is
\[(n-3)!C_{n}^4\]
\end{enumerate}
This builds up the logic for finding the whole number of configurations in the whole partitioning scheme given by formula \ref{eqn:NbConfigPartition}.
\end{proof}
This lemma shows that the complexity of the partitioning scheme is super exponential in the number of observations. It is thus important to find out a way to avoid this number as much as possible and preferably reduce the complexity into polynomial. We give a useful notion that will be used later in the sequel.\\
\begin{definition}[Significance of a hypothesis]
Suppose that the partitioning scheme is tested in the following order $H_1, \cdots, H_t$. We say that a hypothesis $H_i$ for $i\in\{1,\cdots,t\}$ is significant in the partitioning scheme if it implies an increment in the ranks induced from testing the hypotheses $H_1,\cdots,H_{i-1}$.
\end{definition}
This definition suggests that in the algorithm built to test the partitions, we may at each step check if the current hypothesis is significant to the ranking induced from the previously tested hypotheses. If it is not significant, then there is no need to test it, because if it is rejected then it does not help us decide anything about the ranks; otherwise (if it is not rejected) this hypothesis will produce smaller confidence intervals than what we already have or at the best the same confidence intervals. This idea can be very efficient if testing a hypothesis costs too much execution time especially if it requires doing an optimization over some parameters. In the case of the LR test and a Gaussian model, the improvement from checking the significance of a hypothesis was minor.\\
We move now to more interesting ideas which can help reduce the super exponential complexity of the partitioning. In order to find a shortcut inside the partitioning scheme, we need to suppose first that the observed values are ordered, that is $y_1<y_2<...<y_n$. We start by the most simple and easy shortcut which is based on the idea of the Partitioning principle that in order to reject a hypothesis, all sub-hypotheses or partitions must be rejected. 
\begin{lemma}
If the top hypothesis is not rejected, then there is no need to test any other hypothesis because the confidence intervals for the ranks are the trivial ones, that is $[1,n]$.
\end{lemma}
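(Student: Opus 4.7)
The plan is to argue directly from the construction of the confidence intervals described in the proposition of Section~\ref{sec:HowToRank}: the confidence interval for the rank of each center is obtained by taking the union, over all unrejected partitions, of the rank-sets that each partition assigns to that center. Hence if we can show that whenever the top hypothesis $H_{\text{top}}=\{\mu_1=\cdots=\mu_n\}$ is unrejected, the contribution of this single partition alone already saturates every center's rank-set, the conclusion will follow and no further testing can refine the intervals.

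First I would recall that, by Definition~\ref{def:Ranks}, under the configuration $\mu_1=\cdots=\mu_n$ every permutation $p\in\mathcal{S}_n$ satisfies $\mu_{p(1)}\leq\cdots\leq\mu_{p(n)}$, so $R(\mu_1,\dots,\mu_n)=\mathcal{S}_n$ and the set-rank assigned to every center $\mu_i$ is the full set $\{1,2,\dots,n\}$. Thus the top partition, taken on its own, yields the trivial interval $[1,n]$ for each of the $n$ centers.

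Next I would invoke the construction of simultaneous confidence intervals from the proposition in Section~\ref{sec:HowToRank}: the confidence set is $\mathbb{R}^n\setminus\bigcup_{i\in I}P_i$, i.e.\ the union of all unrejected partitions, and the confidence interval for rank $r_i$ is the union of all rank-sets assigned to $\mu_i$ by those unrejected partitions. Since the top partition is, by assumption, not rejected, it belongs to this union, and it already contributes the full rank-set $\{1,\dots,n\}$ to every center. As each individual interval is a subset of $\{1,\dots,n\}$, it must equal $\{1,\dots,n\}=[1,n]$; adding more unrejected partitions can only keep the interval at $[1,n]$, and rejecting further partitions cannot shrink it below what the unrejected top hypothesis already forces.

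The argument is essentially a one-line consequence of the definitions, so there is no real technical obstacle; the only care needed is to state cleanly that the confidence set of ranks is \emph{monotone} in the collection of unrejected partitions (adding unrejected partitions enlarges the set, removing them shrinks it) and that $[1,n]$ is the maximal possible interval. I would conclude by remarking that because the intervals are already maximal, testing any further hypothesis in the partitioning scheme cannot possibly sharpen the output, which is exactly the claim of the lemma and justifies using non-rejection of the top hypothesis as an early-termination criterion in the algorithm.
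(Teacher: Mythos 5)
Your argument is correct and coincides with the reasoning the paper leaves implicit: the top partition assigns the full rank-set $\{1,\dots,n\}$ to every center, so if it is unrejected it already forces every simultaneous interval to be the maximal one $[1,n]$, and no further testing can refine this. The paper states the lemma without proof, and your observation that the intervals are monotone in the collection of unrejected partitions is exactly the justification intended.
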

Now comes the most important result in this section.
\begin{proposition}\label{lemm:ShortcutCorrectHyp}
It suffices to test the hypotheses whose ordering coincides with the empirical ordering that is the one which is induced by the observed data.
\end{proposition}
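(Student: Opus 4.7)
The strategy is to show that every incorrectly ordered partition is dominated by a correctly ordered one, both in LR value and in rank-set inclusion; the pool-adjacent-violators algorithm (PAVA) of Section~\ref{sec:LRTest} supplies the domination. Fix a partition $H$ whose ordering disagrees with $y_1<\cdots<y_n$, and let $\mu^\star$ be the minimizer of $\sum_i(y_i-\mu_i)^2/\sigma_i^2$ over the closure of $H$, computed by PAVA. Let $H'$ be the partition whose equality blocks and strict inequalities match those of $\mu^\star$: within each pool PAVA forces an equality, and between consecutive pools PAVA delivers a strict inequality.

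I claim three properties of $H'$. First, since $\mu^\star$ also minimises the same quadratic over the closure of $H'$, we obtain $LR(H)=LR(H')$. Second, $H'$ differs from $H$ only by replacing strict inequalities inside each pool by equalities, so by Definition~\ref{def:InclusionConfigs} we have $R(H)\subset R(H')$. Third, using the adjusted critical value of Section~\ref{subsec:AldoAdjust}, $\gamma_H$ equals the $(1-\alpha)$-quantile of $\chi^2(n-l_H-1)$ where $l_H$ is the number of strict inequalities; since $H$ has strictly more strict inequalities than $H'$, we get $\gamma_H\leq\gamma_{H'}$.

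Combining the three: if $H'$ is not rejected, the inclusion $R(H)\subset R(H')$ places the rank contribution of $H$ inside the contribution of the correctly ordered $H'$; if $H'$ is rejected, then $LR(H)=LR(H')>\gamma_{H'}\geq\gamma_H$ implies that $H$ is also rejected and therefore contributes nothing. Either way, the union of unrejected rank-sets over all partitions equals the union over correctly ordered ones, and it suffices to test the latter.

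The point deserving most care is verifying that $H'$ really belongs to the correctly ordered sub-collection, i.e.\ that the PAVA block structure forms consecutive segments in the empirical order. For a single adjacent inversion this is transparent; when multiple cascading inversions occur, one must either extend the pools until they are contiguous in identity order or argue inductively on the number of discordances between $H$ and the empirical order. This bookkeeping is the only non-routine part of the argument.
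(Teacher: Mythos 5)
Your proof is correct and takes essentially the same route as the paper's: both pass from an incorrectly ordered $H$ to the PAVA-pooled hypothesis $H'$ with identical $LR$ statistic, strictly more equalities (hence $R(H)\subset R(H')$ by Definition~\ref{def:InclusionConfigs} and a larger $\chi^2$ critical value), and conclude that testing $H$ is redundant. The one point you flag as non-routine --- verifying that the pooled block structure of $H'$ is actually contiguous in the empirical order rather than merely having correctly ordered block averages --- is precisely the step the paper's own proof asserts without detailed justification, so you have identified the genuine weak spot of the argument rather than introduced a new one.
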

\begin{proof}
Consider a hypothesis $H_l$ from the $l^{\text{th}}$ level. This hypothesis contains $l-1$ inequality. Write this hypothesis as a union of blocks where each block contains all elements related with each other by an equality, that is $H_l = \{A_1,\cdots,A_l\}$. \\
Suppose that this hypothesis does not follow the empirical ordering. If $H_l$ is rejected, then it adds nothing to the confidence intervals of the ranks. Suppose then that the hypothesis $H_l$ is not rejected. When we calculate the maximum likelihood under this hypothesis by the PAVA, adjacent blocks which do not respect the empirical ranks will be pooled together. By merging the pooled blocks of hypothesis $H_l$, we can construct a well ordered hypothesis $\bar{H}_{s} = \{\tilde{A}_{1},\cdots,\tilde{A}_{s}\}$ with $s<l$ whose empirical (weighted) averages inside each one of its blocks follow the order imposed by the hypothesis itself. This hypothesis $\bar{H}_{s}$ clearly has the same LR as the original one $H_l$. Since $\bar{H}_s$ comes from an upper level (level number $s$), the associated critical level is higher. Thus, the non rejection of $H_l$ will imply the non rejection of the hypothesis $\bar{H}_s$. Since both hypotheses induce the same confidence intervals for the ranks, then we can consider only one of them which will be the correctly ordered one. \\
\end{proof}
This lemma is interesting in itself because of its generality. The model does not interfere in the proof and the idea can be adapted to other types of test statistics other than the LR test. Besides, the critical value does not interfere either as long as it is non increasing along the levels. On the other hand, using this result we were able to reduce the complexity of the partitioning procedure from $\sum_{l=1}^{n}{n!/l!}$ to $2^{n-1}$. One can imagine the gain by looking at Stirling's approximation. If we only consider the first term in the sum, that is $n!$. We have $n!\approx \sqrt{2\pi n} \left(\frac{n}{e}\right)^n$. Thus, we reduced the number of tested hypotheses by a factor of more than $2\sqrt{2\pi n} \left(\frac{n}{2e}\right)^n$. For example, for $n=10$, the total number of partitions is 6235301 whereas it becomes 512 using Proposition \ref{lemm:ShortcutCorrectHyp}.\\
It is important to point out that since we are only considering the hypotheses for which the empirical ranking does not contradict with the ordering imposed by the hypotheses themselves. Therefore, the maximum of the likelihood under the null hypothesis is always attained inside the null and there is no need to use the PAVA in order to calculate the maximum likelihood. The maximum of the likelihood under the null is obtained simply by averaging inside each group of equalities between the centers.\\

We provide now two stopping criterion for an algorithm which tests the partitions level by level. They can be useful in some cases especially if the confidence intervals are very wide.
\begin{enumerate}
\item If we do not reject any hypothesis at some level $l$, then there is no need to check any hypothesis from lower levels (with fewer number of equalities) since they would produce at most the same confidence intervals for the ranks.
\item If the current minimum length of both the upper part and the lower part of the confidence interval of center $\mu_i$ is equal to the number of equalities in the current level, then there is no longer any need to continue further in the partitioning for the center $\mu_i$. Thus if this is fulfilled for all individuals then the procedure should stop and no further information is gained. This is a very good shortcut if the estimated confidence intervals are wide enough. More formally, the greatest confidence interval that the partitions at a level $l$ provide for center $\mu_i$ (with empirical rank equal to $i$) is given by:
\[\left[\max(i-(n-l),1),\; \min(i+n-l,n)\right]\]
If \emph{for all centers} the current confidence interval is larger than this one as we arrive at level $l$, then the testing procedure may stop.
\end{enumerate}

\section{Algorithms to calculate the confidence intervals for the ranks based on the partitioning principle}\label{sec:Algos}
We present two algorithms which produce the confidence intervals for the ranks. The first one uses the partitioning scheme we have considered till now that is by considering $n$ levels where each level $l$ contains all hypotheses with $l-1$ inequality for $l=1,\cdots,n$. The second algorithm focuses on the blocks of centers related by equalities in the hypotheses starting at some center. Each algorithm has its own properties and tips. Although in practice, the second one proved to be more efficient, we present the two of them in case new tips and shortcuts are found by an interested reader for the first one. Note also that only correctly ordered hypotheses are considered due to Proposition \ref{lemm:ShortcutCorrectHyp}. \\
A key idea is how to represent or code the hypotheses on the one hand, and then how to generate these codes efficiently. For each algorithm, we will present a way to generate the list of hypotheses efficiently. Other possibilities can exist and finding a simpler way to generate and keep track of the hypotheses may improve significantly the performance of the algorithms. 
\subsection{A level-by-level algorithm}
In this algorithm, the idea is to use the partitioning scheme presented in figure (\ref{fig:Partitions3by3}) for three centers. We take into account that hypotheses whose ordering do not comply with the empirical ranking are not needed due to Proposition \ref{lemm:ShortcutCorrectHyp}. In practice, it is not possible to code all the hypotheses prior to the testing, because it concerns keeping in hand a matrix of size $2^{n-1}\times c$ where $c$ is the length (or the lengths) of the representation. Thus, for "normal" computers it becomes easily impossible to generate such matrix (or structure) as $n$ grows. Therefore, it is necessary to be able to generate the configurations (representations) one by one to avoid memory problems. \\
We propose to represent a hypothesis by keeping track of the positions of the inequalities so that a hypothesis is cut according to the blocks of equalities between the centers. It allows an efficient way when we want to calculate the LR. Indeed, since we only test hypotheses with a correct ordering w.r.t the empirical one, the PAVA is not needed and the LR for some partition is a mere sum of averages of the blocks of equal centers and our representation tells us directly where are the bounds of each block. For example, in the case of 3 centers, the representation of the correctly ordered hypotheses (without the top hypothesis) is the set
\[\{(0),(1),(0,1)\}.\]
The top hypothesis with no inequalities is generally tested separately and thus not included in the representation.
 For reasonable values of $n$, it is possible to use function \texttt{combn} from the \texttt{utils} package in the statistical program \texttt{R} in order to generate efficiently the set of configurations level by level\footnote{In our case, it worked for $n<26$ with a laptop provided with 8 GB of RAM.}. It is unfortunately not possible to generate these configurations one by one using the method programmed inside function \texttt{combn} so that we avoid generating the whole matrix at once. We propose a different way to do so by considering the following function. Let $(c_1,\cdots,c_k)$ with $c_i,\cdots,c_k$ be non negative integers such that $c_1<\cdots<c_k$. Define function $C$ as follows
\[C(c_1,\cdots,c_k) = \binom{c_1}{1} + \cdots + \binom{c_k}{k}.\]
where $\binom{a}{b}$ is the binomial coefficient. Function $C$ is known as the combinatorial number system in combinatorics \cite{Knuth}. This is a one-to-one function between the set of configurations 
\[\{(c_1,\cdots,c_k)\in\mathbb{N}^k, 0\leq c_1<\cdots<c_k\leq n-2,\}\]
and the set of numbers $\{1,\cdots,\binom{n-1}{k}\}$. For our purpose, we use the reciprocal function. At each level $k$, we iterate on the numbers between $1$ and $\binom{n-1}{k-1}$ (recall that the first level is the top one with no inequalities so that $k$ starts at 2), and then calculate for each number its inverse using function $C^{-1}$. This is done iteratively as follows. For a number $i$, we look for the maximum number $c_k$ such that $\binom{c_k}{k}\leq i$. Then, we subtract $\binom{c_k}{k}$ from $i$ and look for the maximum number $c_{k-1}$ such that $\binom{c_{k-1}}{k-1}\leq i - \binom{c_k}{k}$ and so on until we find $c_1$. This way we produce the set of correctly ordered hypotheses in level $k$ when $i$ spans the whole set $\{1,\cdots,\binom{n-1}{k}\}$.\\
\begin{algorithm}[H]
\label{algo:LevelLevel}
\KwData{Ordered sample $y_1,\cdots,y_n$ and standard deviations $\sigma_1,\cdots,\sigma_n$. Confidence level $\alpha$.}
\KwResult{For each $i, [a_i,b_i]$ such that $\mu_i\in[a_i,b_i]$ with joint probability greater than $1-\alpha$.}
\eIf {the top hypothesis is not rejected}
{Set confidence intervals to $[1,n]$; $\forall i, a_i=1, b_i = n$.}
{
 \For{$l$ from $2$ to $n-1$}
 {
		m = choose($n-1,l-1$) \;
		Generate all hypotheses with $l-1$ inequalities $H_{1,l},\cdots,H_{m,l}$\;
		\For{$i$ from 1 to $m$}
		{
			\If{$H_{i,l}$ is significant to the actual ranking}
			{
				Calculate the LR under $H_{i,l}$ \;
				\If{LR$(H_{i,l})<\chi_{1-\alpha}^2(n-l)$}
				{
					Update the ranks.
				}
			}
		}
		\If{$\forall i, \left[\max(i-(n-l),1),\; \min(i+n-l,n)\right]\subset [a_i,b_i]$}
		{
			Break the loop over $l$.
		}
 }
}
\caption{A level-by-level algorithm}
\end{algorithm}
In Algorithm \ref{algo:LevelLevel}, we avoided the details of generating the representation of the hypotheses using the combinatorial number system in order to keep the algorithm clear and easily readable. We attract the attention of the interested that since we are calculating the LR for each tested hypothesis that is $2^{n-1}$, there is no need to recalculate each time the averages inside each block with equalities between the centers. We calculate the quantities $\{LR(\mu_i=\cdots=\mu_j),1\leq i< j\leq n\}$ separately prior to the testing procedure, so that we only sum up the parts of the log-likelihood according to the configuration defining the partition. Moreover, if we are using the combinatorial number system, the decoding of the numbers requires the calculus of too many binomial coefficients, and these need not be recalculated each time needed and can be calculated once and for all prior to the testing procedure, that is the set $\{\binom{j}{i},1\leq i< j\leq n\}$. These two remarks are the most important tips when programing the partitioning procedure in Algorithm \ref{algo:LevelLevel} and help reduce the execution time greatly.\\
There is still a room for improvement, but it will break down the level-structure. We can code the hypotheses using binary numbers by setting for example 0 for equality and 1 for inequality (see \cite{Knuth} for more insights). Thus, hypotheses from the 3-centers example are coded by $\{(0,0),(1,0),(0,1),(1,1)\}$ (the top hypothesis included). Generating these configurations can be done by converting numbers between 0 and $2^{n-1}-1$ into binary. Thus, we no longer take into account which level we are considering. We found out that the best way to make use of this binary coding is by saving a sparse representation of it using again decimal numbers. As we calculate the residual of dividing the number by 2, we save only the position of the one (or the zeros) whenever it appears. It produces a different way to represent the hypotheses by keeping track of inequality positions but without respecting any leveling structure of the partitioning. This might be in some situations not the best way to do it especially if we are able to find more efficient shortcuts using the level-structure other than what we found here. In our situation and with only the shortcut proved in Proposition \ref{lemm:ShortcutCorrectHyp}, the level-structure programmed using the combinatorial number system can be very efficient if we are expecting either very narrow or very wide confidence intervals. Otherwise, the binary structure is far more efficient.
\subsection{A block-check algorithm}\label{subsec:BlocksApproach}
The idea behind this algorithm is based on how we construct a confidence interval for the rank of some center. The easiest way to explain it is to take the case of the center with the lowest empirical rank. The confidence interval for its rank has the form $[1,a_1]$. This confidence interval is obtained from not rejecting any hypothesis of the form $\{\mu_1=\cdots=\mu_{a_1}<\cdots\}$. In order to find one of these hypotheses, we start with the largest block starting at the first center. This may be $\{\mu_1=\cdots=\mu_n\}$ or any smart guess (we will see later an efficient way to do so). We then move on to a smaller one and always by testing all possible configurations which can be added at the end. In other words, assume we want to test the block $\{\mu_1=\cdots=\mu_r\}$. We need to check all hypotheses of the form $\{\mu_1=\cdots=\mu_r<\cdots\}$. It is as if we are doing a sub-partitioning on the set of centers $\{\mu_{r+1},\cdots,\mu_n\}$ and then concatenating these partitions with the block of interest $\{\mu_1=\cdots=\mu_r\}$ when we calculate the LR statistic. Whenever one of these hypotheses ($\{\mu_1=\cdots=\mu_r\}$+sub-partition) is not rejected, we say that the block of interest $\{\mu_1=\cdots=\mu_r\}$ is not rejected. Moreover, there is no need to continue further with center $\mu_1$ because this is sufficient to deduce that $a_1=r$ according to the partitioning principle, and the confidence interval for the rank of $\mu_1$ is $[1,r]=[1,a_1]$. For centers in the middle, say center $\mu_i$, we need to count for two sub-partitionings; one for the left part that is $\{\mu_1,\cdots,\mu_{i-1}\}$ and one for the right part that is what is left from the end of the block till the highest center $\mu_n$. In order to generate the left and right sub-partitions, we use for example a binary-decimal representation as in Algorithm 1.\\
Figure (\ref{fig:BlockBasedPartitioning}) shows the blocks that we need to test and the complementary subset of centers. For a block of size 3 starting at center $A$, the sub-partitions that we have to the right contains but one center here, that is center $D$. For a block of size 2 starting at center $A$, the sub-partitions to the right contains 2 centers, and thus two sub-partitions $\{C=D\}$ and $\{C<D\}$ need to be considered, and are then concatenated to the original block to form the 3rd and 4th hypotheses in the column of center $A$. Last but not least, for a block of size 1 containing only center $A$, the sub-partitioning to the right contains 4 cases as shown in the scheme. These hypotheses need not be tested, because we reached the minimum block size for a center which is 1. We kept them however to illustrate the idea of the algorithm. Generally, a smart guess for a minimum-block size can be calculated easily so that we stop at an earlier stage.\\
\begin{figure}[ht]
\centering
\includegraphics[scale=0.6]{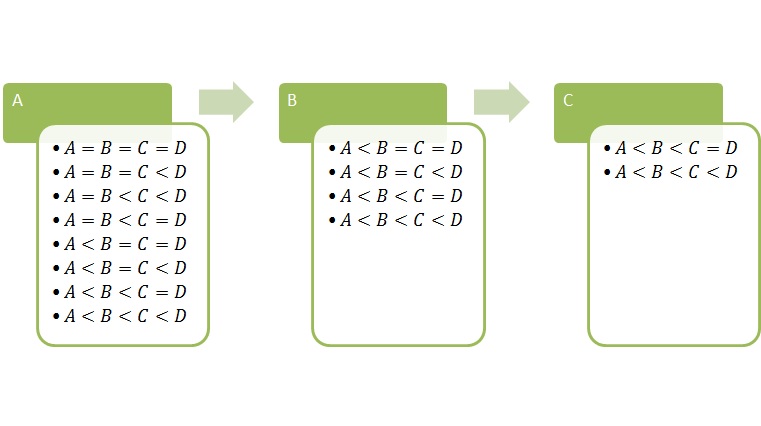}
\caption{All possible sub-hypotheses that need to be tested in a block-based algorithm. In this example, we consider 4 centers. For the 4th center, there is no hypothesis to test.}
\label{fig:BlockBasedPartitioning}
\end{figure}

It is true that for each middle block, we need at worst case scenario to perform two sub-partitions (left and right complements of the block). However, the complexity will be smaller than the complexity of treating the whole set of centers. For example, if we are treating a block containing $k$ centers such that there remains $b$ centers to the left and $d$ centers to the right. We need at worst case scenario to perform $2^{b+d-2}$ tests in order to confirm the rejection or the non rejection of this block. Thus, depending on the prior information we have on the blocks, we may improve upon the execution time. The complexity of the algorithm can be estimated to be $\mathcal{O}(2^{n-c})$ where $c$ is the the minimum of estimated minima block sizes for the centers\footnote{This minimum is generally different from min(\texttt{MinBlock}) shown in Algorithm \ref{algo:BlockPartitioning}. Indeed, the largest center will always have a minimum block size of 1. Therefore, it is important to look for the first center whose rank confidence interval contains the last center, say $\mu_{n_0}$. The minimum block size is then calculated only for centers $\mu_1\cdots,\mu_{n_0}$.}. Worst case scenario, we perform as much as we do in Algorithm \ref{algo:LevelLevel} which can happen if we have a center with a point-wise confidence interval or without relying on any prior information that is $c=1$. On the other hand, the best case scenario happens if we are lucky enough. This happens when we have a good prior information about the maximum and the minimum size for the blocks (see \ref{rem:BestMinMaxBlockSize}) and that we find from the beginning a hypothesis for which the maximum-sized block is not rejected. In this case, we won't need to perform the whole $2^{b+d-2}$ tests, but only a small portion of them. This is too optimistic, but if we combine this algorithm with the ideas of the next section, it might happen sometimes. Indeed, the next section provides a way to find a minimum and a maximum size for the blocks necessary for building the confidence interval for each of the centers. The result is most of the time very accurate that the difference between the minimum and the maximum is only 1, so that we would only be checking whether it is the minimum or the maximum which equals to the border of the confidence interval. \\
It is also interesting that we can get a lower bound on the size of the blocks by testing a subset of the partitions. For example, we can test all the blocks without any additional equality in the left or in the right sub-partitions. In Figure (\ref{fig:BlockBasedPartitioning}), this corresponds to testing $\{A=B=C=D\}$, $\{A=B=C<D\}$, $\{A=B<C<D\}$, $\{A<B=C=D\}$, $\{A<B=C<D\}$ and $\{A<B<C=D\}$. This can be done with a quadratic complexity $\mathcal{O}(n(n-1))$. This gives in practice a very close lower bound for the true size of the blocks and can be considered as a starting point. Notice that we can also use any low-complexity method which produces confidence intervals in order to produce the starting point for the confidence intervals as long as we are sure that it provides anti-conservative confidence intervals for the ranks.\\
In Algorithm \ref{algo:BlockPartitioning} in the appendix, we give briefly a pseudo-code for our approach based on testing the blocks. The \texttt{MinBlock} and \texttt{MaxBlock} are vectors containing minimum and maximum blocks sizes for each center. It is important to use the confidence intervals implying the minimum block sizes so that we build upon them the exact ones. So, if we calculate the minimum block size through testing blocks hypotheses, the resulting confidence intervals must be used as initial intervals for the algorithm. If we use the result of the next section, we must use the lower-bound confidence intervals as initial intervals for the algorithm and update them whenever a new hypothesis is not rejected.\\

\section{Bracketing: a polynomially-complex algorithm}\label{sec:ApproxPartition}
The challenge in the algorithms we presented earlier was to provide a fast way to go through the partitions. Still; the exponential number of tests required makes it impossible to obtain a fast algorithm for relatively large samples, and any improvement on the algorithm itself or the code used to compile it (with simple or advance programming languages) will easily disappear as $n$ grows. However, if we are willing to loosen up and trade complexity with scalability, it is possible to provide approximate results for the confidence intervals. In this section, we are going to provide a very narrow bracketing for the "exact-partitioning" confidence intervals using a polynomially-complex algorithm which generally in practice has a gap not exceeding one rank per center. The idea of the algorithm is based on two things. the first is to make use of the almost linearity of the quantile of the $\chi^2$ distribution when the number of degrees of freedom becomes high enough. It is natural to assume this since high degrees of freedom appear when we have a great number of centers. The case of small number of degrees of freedom appear either if we have a small number of centers (where Algorithms \ref{algo:LevelLevel}, \ref{algo:BlockPartitioning} can be used) or in special situations where we have very distant centers. The second idea of the algorithm is to make use of the form of the test statistic which is a sum of sums where each one of them corresponds to the share of a block of equal centers (under the null). The gain from this approximation is mainly two things; obtain a good approximation of the result of an exact partitioning for large sample sizes where non of our proposed algorithms from the previous section is able to deliver a result within a reasonable time, and to a provide good upper and a lower bounds for the block sizes of the centers so that they can be used in Algorithm \ref{algo:BlockPartitioning} to obtain an exact-partitioning result for moderate sample sizes (less than 50 in our case).\\
Figure (\ref{fig:Chi2Approx}) shows two linearizations for the qunatile of the $\chi^2$; an upper bound and a lower bound. If the quantile of the $\chi^2$ was a linear function, then the LR statistic can be regarded as a sum of contributions. Each equality between two centers under the null adds a new term to the LR statistic and implies on the other hand an addition of 1 degrees of freedom to the $\chi^2$ quantile against which we are testing the null. In the same spirit as paragraph \ref{subsec:BlocksApproach}, we test a block by adding (at most) two sub-partitions; one to the left of the block and one to the right of it. Each sub-partition contains a subset of the centers, and depending on how these centers are related through the partition (i.e. under the null), their contribution to the LR statistic and to the critical value (the $\chi^2$'s quantile) changes. The linearization of the $\chi^2$ permits to interpret these two additions of sub-partitions as a sum of two contributions to the contribution of the block of interest, and by finding the minimum contribution of these two additions in the whole LR statistic, we can judge to reject or not reject the block of interest. Indeed, if the minimum contribution of both the left and the right additions permits to reject the whole partition, then \emph{any} addition to the block of interest will result in a rejected partition. Otherwise, it will mean that there exists at least one partition containing the block of interest which is not rejected and this is sufficient to include the block of interest in the calculus of the confidence intervals.
\begin{figure}[ht]
\centering
\includegraphics[scale = 0.6]{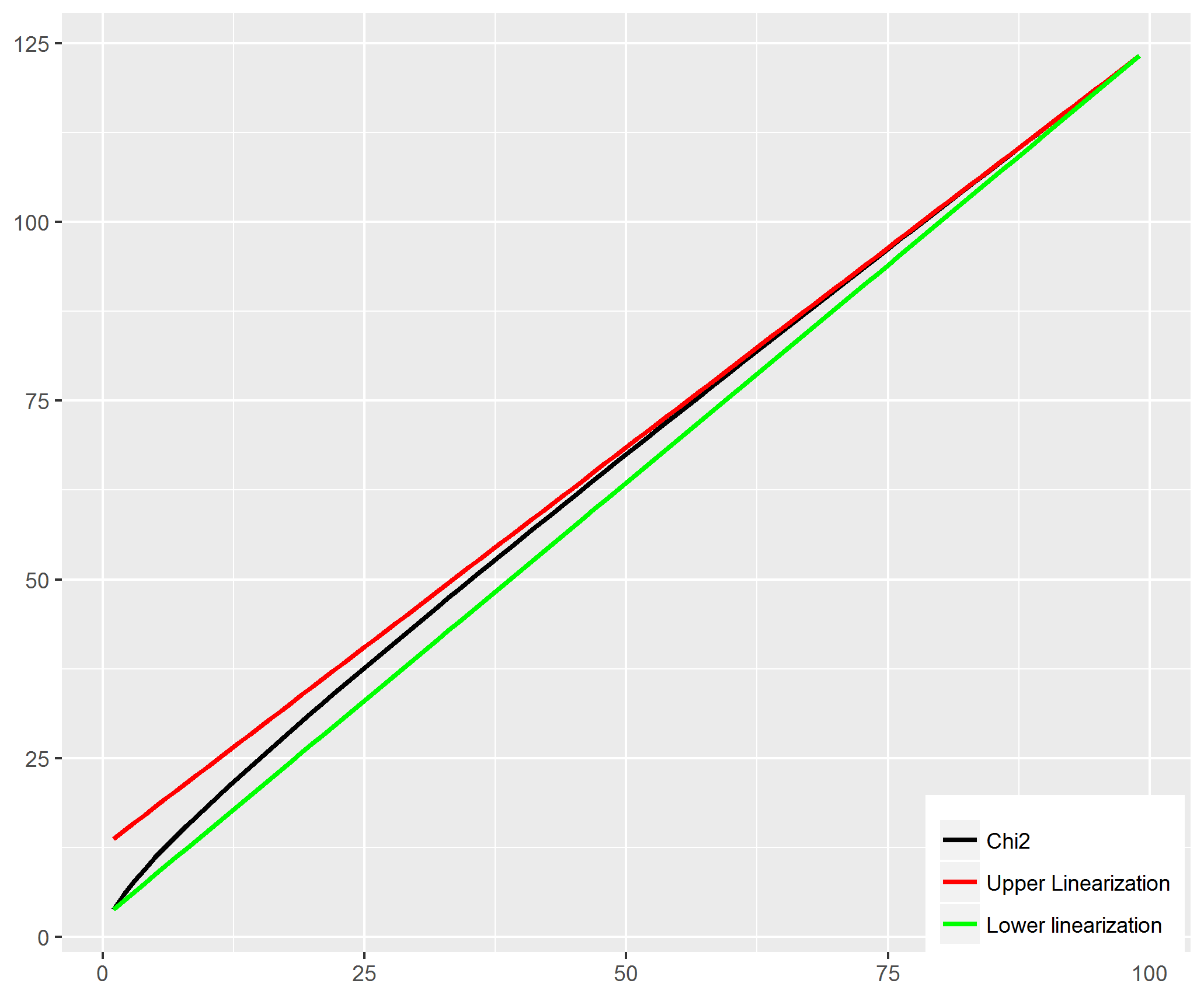}
\caption{Lower and upper approximations for the $\chi^2$ quantile up to $n=100$.}
\label{fig:Chi2Approx}
\end{figure}

More formally, assume we dispose of a sample with $n$ observations. Denote $U_n(x) = a_{U,n} x + b_{U,n}$ the first linearization of the $\chi^2$ quantile which is in fact an upper affine bound of it. Denote also $L_n(x) = a_{L,n} x + b_{L,n}$ another linearization which is a lower affine bound of the $\chi^2$ quantile. The constants are given by
\begin{eqnarray}
a_{U,n} = \mathbb{F}_{\chi^2(n-1)}^{-1}(1-\alpha) - \mathbb{F}_{\chi^2(n-2)}^{-1}(1-\alpha), & \quad & b_{U,n} = \mathbb{F}_{\chi^2(n-1)}^{-1}(1-\alpha) - a_{U,n}(n-1) \label{eqn:UpperApprox}\\
a_{L,n} = \frac{\mathbb{F}_{\chi^2(n-1)}^{-1}(1-\alpha) - \mathbb{F}_{\chi^2(1)}^{-1}(1-\alpha)}{n-2},& \quad & b_{L,n} = \mathbb{F}_{\chi^2(1)}^{-1}(1-\alpha) - a_{L,n},
\label{eqn:LowerApprox}
\end{eqnarray}
and an example with $n=100$ is shown in figure (\ref{fig:Chi2Approx}). These constants ensure that the upper affine bound is the tangent on the curve of the $\chi^2$ quantile on $n-1$, so that it has greater values than all the quantile values required for testing the partitions. For the lower affine bound, we make sure that the line starts at the quantile with $n-1$ degrees of freedom and ends at the quantile with 1 degree of freedom so that it remains below all required quantiles for testing the partitions. \\
\emph{Why should it work?} We first need to precise what do we mean by the contribution of a block in a hypothesis. Let's imagine a very simple situation where the tested partition is decomposed into two blocks, that is 
\[H = \{\mu_1=\cdots=\mu_k<\mu_{k+1}=\cdots=\mu_n\}.\]
Using the LR test, we compare the LR statistic with the quantile of the $\chi^2(n-2)$, say $\gamma_{n-2}$. If
\begin{equation}
LR(\{\mu_1=\cdots=\mu_k<\mu_{k+1}=\cdots=\mu_n\}) \leq \gamma_{n-2}
\label{eqn:LRAllApprox}
\end{equation}
the hypothesis $H$ is not rejected. The LR statistic is the sum of two terms, namely
\[LR(\{\mu_1=\cdots=\mu_k<\mu_{k+1}=\cdots=\mu_n\}) = LR(\{\mu_1=\cdots=\mu_k\}) + LR(\{\mu_{k+1}=\cdots=\mu_n\}).\]
Now, replacing $\gamma_{n-2}$ by, for example, the upper affine bound in inequality (\ref{eqn:LRAllApprox}) gives
\[LR(\{\mu_1=\cdots=\mu_k\}) - a_{U,n} (k-1) + LR(\{\mu_{k+1}=\cdots=\mu_n\}) - a_{U,n}(n-k-1) \leq b_{U,n}.\]
Define the contribution of the block $\mu_1=\cdots=\mu_k$ by $LR(\{\mu_1=\cdots=\mu_k\}) - a_{U,n} (k-1)$ (and similarly for the other one). Now, each block contributes in judging on the (non)rejection of the partition $H$ by a quantity. The sum of contributions produces the contribution of the whole partition in the judgment on its (non)rejection, because by summing the two contributions it only remains to compare the sum with the intercept $b_{U,n}$. If it is greater, then we reject, otherwise we do not reject.\\
Suppose now that we are interested in the first center, so that the block of interest is $\mu_1=\cdots=\mu_k$. If we have an idea about the minimum contribution that an additional block from the subset of centers $\{\mu_{k+1},\cdots,\mu_{n}\}$ (for example $\mu_{k+1}=\cdots=\mu_{n}$) can add, then we will calculate the sum
\[LR(\{\mu_1=\cdots=\mu_k\}) - a_{U,n} (k-1) + \min_{H_i\in \mathcal{P}(\{\mu_{k+1},\cdots,\mu_{n}\})} \{LR(H_i) - a_{U,n}(\#H_i)\}\]
where $\mathcal{P}(\{\mu_{k+1},\cdots,\mu_{n}\})$ is the set of all (correctly ordered) sub-partitions that can be generated from the set of centers $\{\mu_{k+1},\cdots,\mu_{n}\}$ by introducing "$=$" or "$<$" between the centers, and $\#H_i$ is the number of equalities forming this sub-partition. Then, it suffices to compare this sum of two contributions with $b_{U,n}$. If it exceeds it, then the block of interest $\mu_1=\cdots=\mu_k$ is rejected, otherwise it is not rejected and is taken into account in the calculus of the confidence intervals.\\
Last but not least, if we use the upper affine bound instead of the $\chi^2$ quantiles, then we will \emph{reject at most} the same number of partitions that we reject by using the $\chi^2$ quantiles. Thus, the obtained confidence intervals are more conservative than the ones obtained using the $\chi^2$ quantiles and thus construct upper bounds for the confidence intervals obtained from the exact partitioning (obtained using either Algorithm \ref{algo:LevelLevel} or Algorithm \ref{algo:BlockPartitioning}). On the other hand, with similar arguments, using the lower affine bound we \emph{reject at least} as much as we reject using the $\chi^2$ qunantiles and we, therefore, construct lower bounds for the confidence intervals obtained from the exact partitioning (obtained using either Algorithm \ref{algo:LevelLevel} or Algorithm \ref{algo:BlockPartitioning}).\\

\emph{How do we calculate the minimum contribution of a subset of centers in a polynomial time?} It is done iteratively starting from the smallest blocks. Assume we want to calculate the minimum contribution of a subset of centers, say $\{\mu_1,\cdots,\mu_k\}$. We need at first to use previously calculated contributions, that is the contribution of the subsets $\{\mu_1,\cdots,\mu_i\}$ and for the subsets $\{\mu_i,\cdots,\mu_k\}$ for $i = 2,\cdots,k-1$. In order to keep things clear, we adapt the following notation
\[\text{minContrib}(\{\mu_{1},\cdots,\mu_{k}\}) = \min_{H_i\in \mathcal{P}(\{\mu_{1},\cdots,\mu_{k}\})} \{LR(H_i) - a_{U,n}(\#H_i)\}.\]
We have:
\begin{multline*}
 \text{minContrib}(\{\mu_{1},\cdots,\mu_{k}\}) = \min\left[LR(\{\mu_1=\cdots=\mu_k\}) - a_{U,n} (k-1),\; \right.\\ \text{minContrib}(\{\mu_{1},\cdots,\mu_{k-1}\}),\;   \text{minContrib}(\{\mu_{2},\cdots,\mu_{k}\}),\; \\ \left. \left\{\text{minContrib}(\{\mu_{1},\cdots,\mu_{i}\}) + \text{minContrib}(\{\mu_{i+1},\cdots,\mu_{k}\}), i=2,\cdots,k-1\right\} \right].
\end{multline*}
In other words, the minimum contribution of a block is the minimum between several terms:
\begin{itemize}
\item the contribution of the whole block of centers;
\item the minimum contribution of the subset of centers after having excluded either of the borders (previously calculated);
\item the sum of contributions of any two subsets of the form $\{\mu_1,\cdots,\mu_{i}\},\{\mu_{i+1},\cdots,\mu_k\}$ for all values $i=2,\cdots,k-1$. Notice that since the two subsets of centers are disjoint, this corresponds to the minimum contribution of the union $\{\mu_1,\cdots,\mu_{i}\}\cup\{\mu_{i+1},\cdots,\mu_k\}$. Moreover, the minima contributions of the subset $\{\mu_1,\cdots,\mu_{i}\}$ and the subset $\{\mu_{i+1},\cdots,\mu_k\}$ should have already been calculated in a previous step. 
\end{itemize}
Algorithm \ref{algo:MinContrib} shows a pseudo-code of these steps.\\
\begin{algorithm}
\KwData{Ordered sample $y_1,\cdots,y_n$ and standard deviations $\sigma_1,\cdots,\sigma_n$.
}
\KwResult{For each block $\{\mu_i,\cdots,\mu_j\}$, its minimum contribution.}
minContrib = matrix(0, nrow = n, ncol = n)\;
\For{j from 2 to n}
{
 \For{i from j-1 to 1}
 {
  Average = $\frac{1}{\sum_{l = i}^j{1/\sigma_l^2}}\sum_{l = i}^j{y_l/\sigma_l^2}$ \;
	$LogL = \sum_{s = i}^j{(y_s - \text{Average})^2/\sigma_s^2}$\;
	minContrib[i,j] = $\min\left(\text{minContrib}[i,i:(j-1)]+\text{minContrib}[(i+1):j,j], LogL - a_{n,U}(j-i)\right)$
 }
}
\label{algo:MinContrib}
\caption{Calculating the matrix of minimum contributions}
\end{algorithm}
Algorithm \ref{algo:MinContrib} must be repeated to calculate also the minimum contributions for the case of the lower affine bound. Finally, the lower and upper bounds for the confidence intervals are calculated by going through the centers and by testing all possible blocks starting at each one of these centers. Notice that the complexity of our new approach is $\mathcal{O}\left(n^3\right)$.
\begin{remark}[Exact partitioning]\label{rem:ExactPartition}
If both the lower-bound and upper-bound confidence intervals coincide on some of the centers, then exact-partitioning confidence intervals for these centers are the shared confidence intervals. If we are lucky enough and both the lower- and upper-bound confidence intervals coincide for all the centers, then we would have found the result of the exact partitioning. This can occur if we have a large number of centers which are close to each other so that the confidence intervals are wide enough and only hypotheses with large blocks are interested in the partitioning. For these hypotheses, we are situated at the top right part of figure (\ref{fig:Chi2Approx}), and the approximation of the $\chi^2$ quantile by the two lines is the most accurate.
\end{remark}
\begin{remark}
\label{rem:BestMinMaxBlockSize}
It is possible to extract from the lower and upper bounds of the confidence intervals lower and upper bounds for the size of a block that need to be tested in Algorithm \ref{algo:BlockPartitioning} (the \texttt{MinBlock} and \texttt{MaxBlock} vectors). Indeed, consider the center $\mu_i$ for which we got the lower-bound confidence interval $[a_i,b_i]$, then we know for sure that there exists at least one partition which contains the block $\mu_i=\cdots=\mu_{b_i}$ which was not rejected, and thus we should at least check larger blocks giving a minimum block size of $b_i-i$ for center $\mu_i$. On the other hand, assume that we get an upper-bound confidence interval equal to $[c_i,d_i]$, then we know for sure that the block $\mu_i=\cdots=\mu_{d_i+1}$ was rejected in all partitions containing this block. Thus, we should only check partitions containing smaller blocks giving a maximum block size of $d_i - i$. Of course, we do not need to consider centers for which the lower- and upper-bound confidence interval coincide according to the previous remark. 
\end{remark}
\begin{remark}[anti-conservative]
Although the use of a lower affine bound of the critical value permits to obtain lower bounds for the confidence intervals, these confidence intervals might not be anti-conservative confidence intervals after all. Indeed, According to what we presented in Section \ref{sec:CriticalLevel}, even the $\chi^2$ critical values may produce conservative confidence intervals for the ranks and more adjustments on the critical value might be possible. Therefore, the lower-bound confidence intervals might also be conservative w.r.t. the \emph{exact} confidence intervals of the ranks using the exact quantile (which is difficult or impossible to calculate). The calculus of an exact quantile was discussed in paragraph \ref{subsec:AdjustCritVal} and remains an open problem for possible future discussions.
\end{remark}
\subsection{Further improvements on the lower and upper bounds}
There is no doubt that our lower and upper approximations for the $\chi^2$ quantiles are not the best options for a specific situation. However, they seem to be very good options for a general setup. For a specific purpose and a given sample, it might be possible to improve upon the lower and upper bounds so that we get an even narrower bracketing of the partitioning's exact result. Suppose that we have a large $n$, say a hundred centers. We have a better chance to obtain narrow bracketing (or even exact partitioning, see remark \ref{rem:ExactPartition}) if the size of the confidence intervals are very wide, simply because testing large blocks requires a greater number of degrees of freedom. This implies that we will be using the lower and upper linearizations near $n=100$ where the two lines are the closest to each others and to the curve of the $\chi^2$ quantiles. It is very probable that both the lower and the upper approximations yield the same confidence intervals so that we get the exact partitioning. On the other hand, if the size of the confidence intervals is small, then we will be using the linearizations near the 1 degree of freedom where the two lines are the most far from each others, and small adjustments on the lower linearization may have an impact on the final result. Therefore, we suggest to adapt the linearizations according to the sample.
\paragraph{Adjustment on the lower bound.} An automatic way to do it is to start with the results of the bracketing we presented at the beginning of this section. We calculate the minimum block size that was not rejected but this time for all the centers together and not center by center, that is the minimum of the \texttt{MinBlock} vector in Algorithm \ref{algo:BlockPartitioning}. In case we use Remark \ref{rem:BestMinMaxBlockSize} to calculate the \texttt{MinBlock} vector, then we need to take into account that the last centers must not be taken into consideration in the calculus of this minimum, and we must stop at the first center whose confidence interval contains the last center $\mu_n$. Indeed, Using Remark \ref{rem:BestMinMaxBlockSize}, for the last centers the blocks sizes start to decrease automatically since they are bounded by the last center $\mu_n$ and the block size for the last center is 1. Denote the obtained minimum size $n_0$. Now, we adjust the slop and the intercept of the lower approximation as follows
\begin{equation}
a_{L,n} = \frac{\mathbb{F}_{\chi^2(n-1)}^{-1}(1-\alpha) - \mathbb{F}_{\chi^2(n_0-1)}^{-1}(1-\alpha)}{n-n_0-2}, \quad  b_{L,n} = \mathbb{F}_{\chi^2(1)}^{-1}(1-\alpha) - a_{L,n}(n_0-1)
\label{eqn:LowerApproxAdjusted}
\end{equation}
\begin{figure}[ht]
\centering
\includegraphics[scale = 0.6]{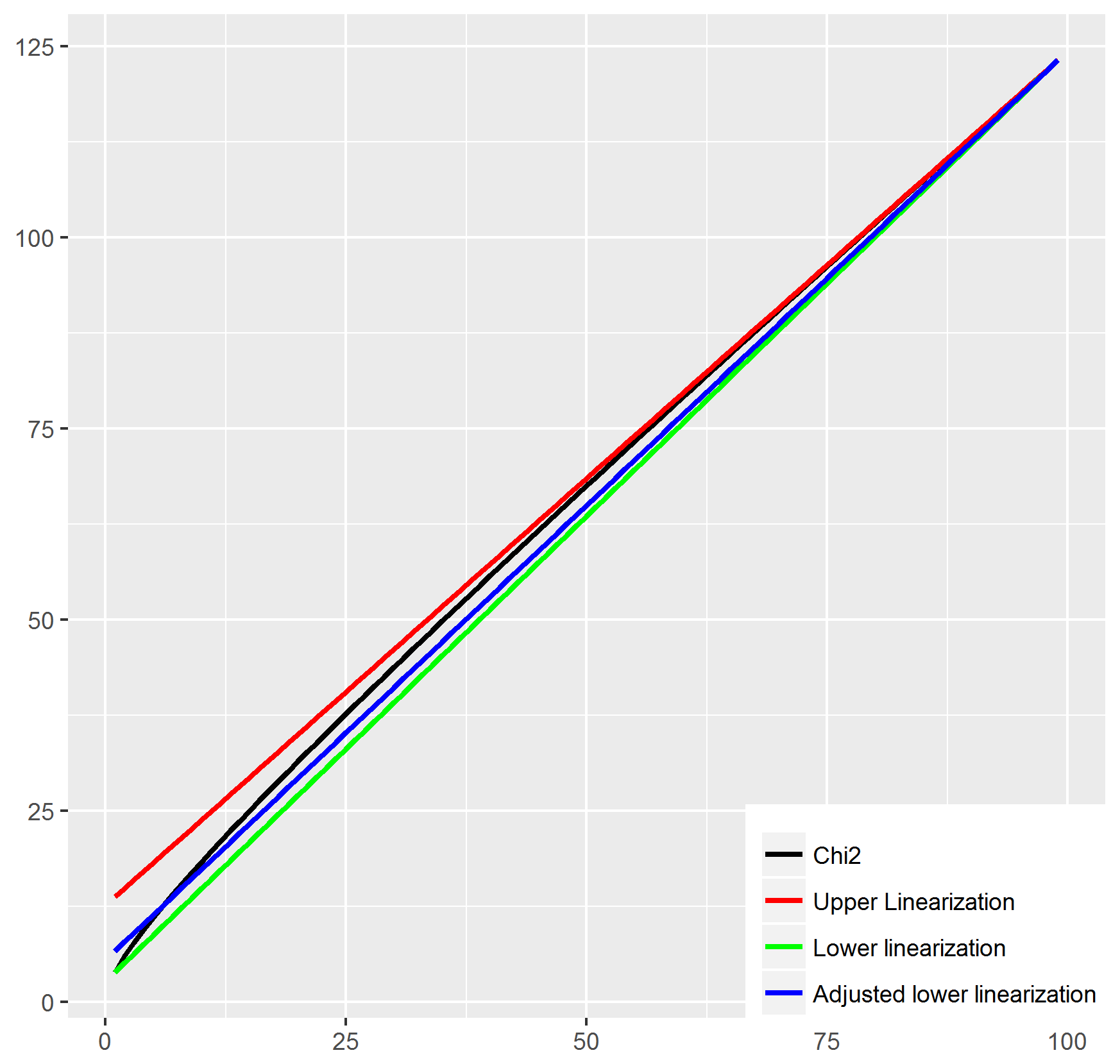}
\caption{Lower and upper approximations for the $\chi^2$ quantile up to $n=40$. Adjustment calculated based on a dataset of 40 observations for which the minimum block size was equal to 5.}
\label{fig:Chi2ApproxAdjust}
\end{figure}
In figure (\ref{fig:Chi2ApproxAdjust}), we show an example of this adjustment. We tried it out on a 40-sample for which the minimum block size found using the lower bound (\ref{eqn:LowerApprox}) was 5. The number of centers where the lower- and upper-bound confidence intervals coincide (and equal to the exact partitioning) is equal to 35 before the adjustment. It became 37 after the adjustment. It means there are only 3 centers left for which the confidence intervals need to be adjusted using Algorithm \ref{algo:BlockPartitioning} in order to get exact-partitioning confidence intervals.
\begin{remark}
Since the lower and upper bounds coincide in general at a good portion of the centers, their confidence intervals must not be touched when we adjust the lower bound. The minimum block size needed in order to find the starting point of the lower affine approximation should only take into account the centers for which the lower and upper bounds are different.
\end{remark}
\paragraph{Adjustments on the upper bound.} We constructed the upper affine bound so that it is the tangent on the curve of the $\chi^2$ at point $n-1$. Since for large values of $n$, more than half the curve of the $\chi^2$ quantile is almost linear, we could move the tangent point from $n-1$ to say $\lfloor n/2\rfloor$. The changes in the results for the larger numbers of degrees of freedom are minor in comparison to the smallest ones. Moreover, any choice for the tangent point provides us with a new affine upper bound and new upper bounds for the confidence intervals. It is thus possible to try out several clever ones and see at first which one provides us with more intersections with the lower-bound confidence intervals (see Remark \ref{rem:ExactPartition}). Then, we can look for other points of intersection occurring with other choices for the tangent points. These can then be integrated in the initial upper-bound confidence intervals so that we make as many corrections as possible.
\begin{figure}[ht]
\centering
\includegraphics[scale = 0.6]{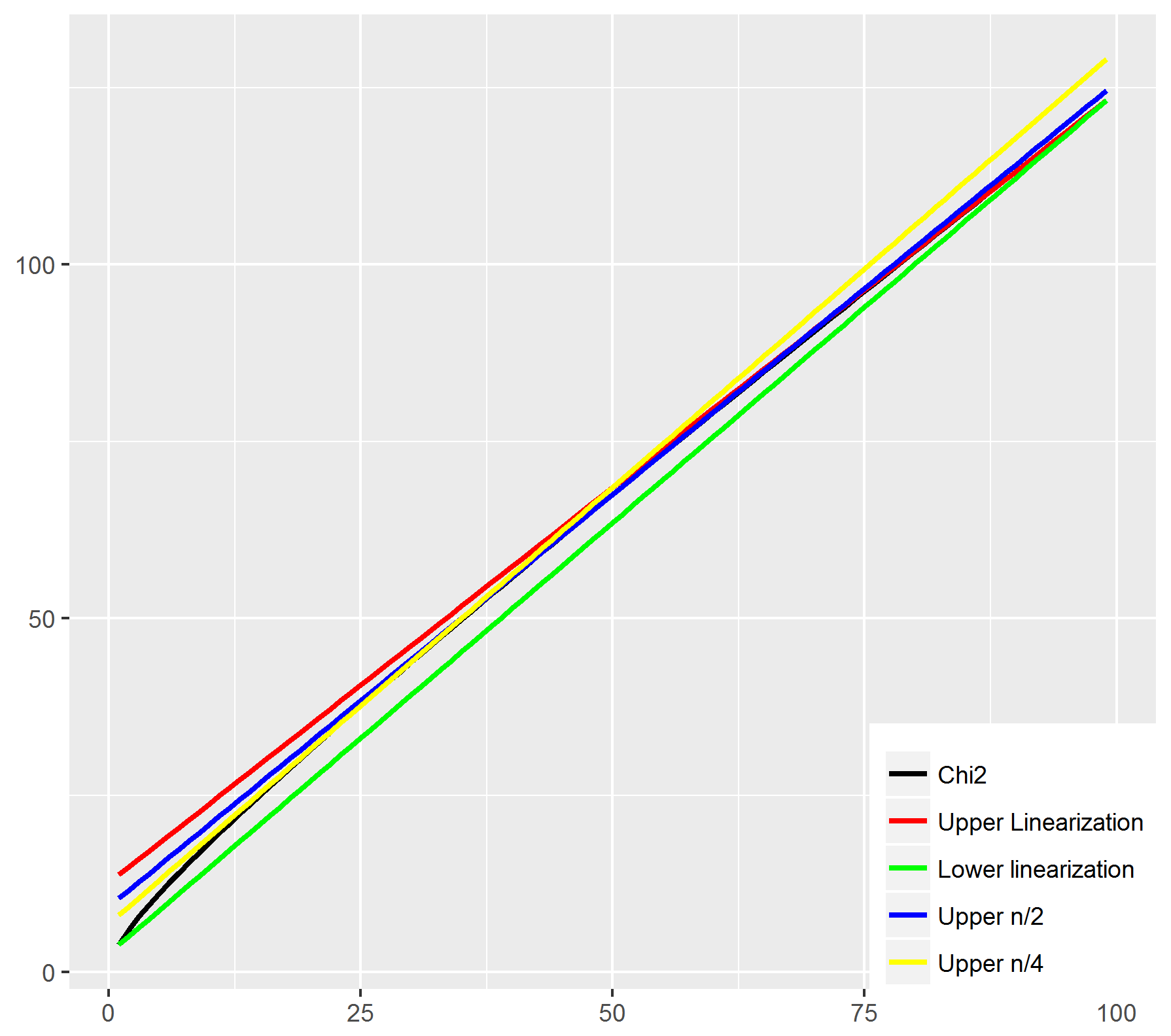}
\caption{Several affine upper approximations by changing the tangent point from $n$ to $n/2$ and finally to $n/4$.}
\label{fig:Chi2UppApproxAdjust}
\end{figure}
\begin{example}[simulation]
Our affine bracketing of the Chi square quantiles introduced in Section \ref{sec:ApproxPartition} seems to be a powerful tool. Most of the time, the lower and upper bounds coincide at least at $50\%$ of the centers so that the exact-partitioning confidence intervals are reached directly. In the simulated examples we show here in this paper, the approximation produced most of the time more than $80\%$ of the exact-partitioning confidence intervals. Surprisingly, the remaining confidence intervals were found with a maximum error of one rank per center. The most difficult situations, where the gap between the lower and upper bound may increase, occur when the proportion of unrejected partitions whose number of degrees of freedom related to the LR test is small. This is because the gap between the lower- and upper-affine bounds is the largest for the smallest degrees of freedom. This happens for example if we have a large number of centers which are distant from each others. The larger $n$, the larger the gap near small degrees of freedom becomes and the bracketing becomes wider. On the other hand, the larger the distance between the centers, the more often we easily reject partitions with large blocks of equalities and do not reject very small blocks. The small blocks are tested with the LR at small degrees of freedom where the gap between the bounds is the largest. In figure (\ref{fig:WorstCaseBounds}), we illustrate a situation where the true centers are the numbers $\{0,3,6,9,\cdots,300\}$. We generated a 100-sample from Gaussian distributions with standard deviations equal to 1. We can notice that the confidence intervals are very small and the lower- and upper-affine bounds hardly coincide. The number of confidence intervals where the lower and upper-affine bounds produce the same confidence intervals is 30 centers (30 percent). Besides, for some of the centers the difference between the lower- and the upper-bound confidence intervals is 2. \\
\begin{figure}[ht]
\centering
\includegraphics[scale = 0.7]{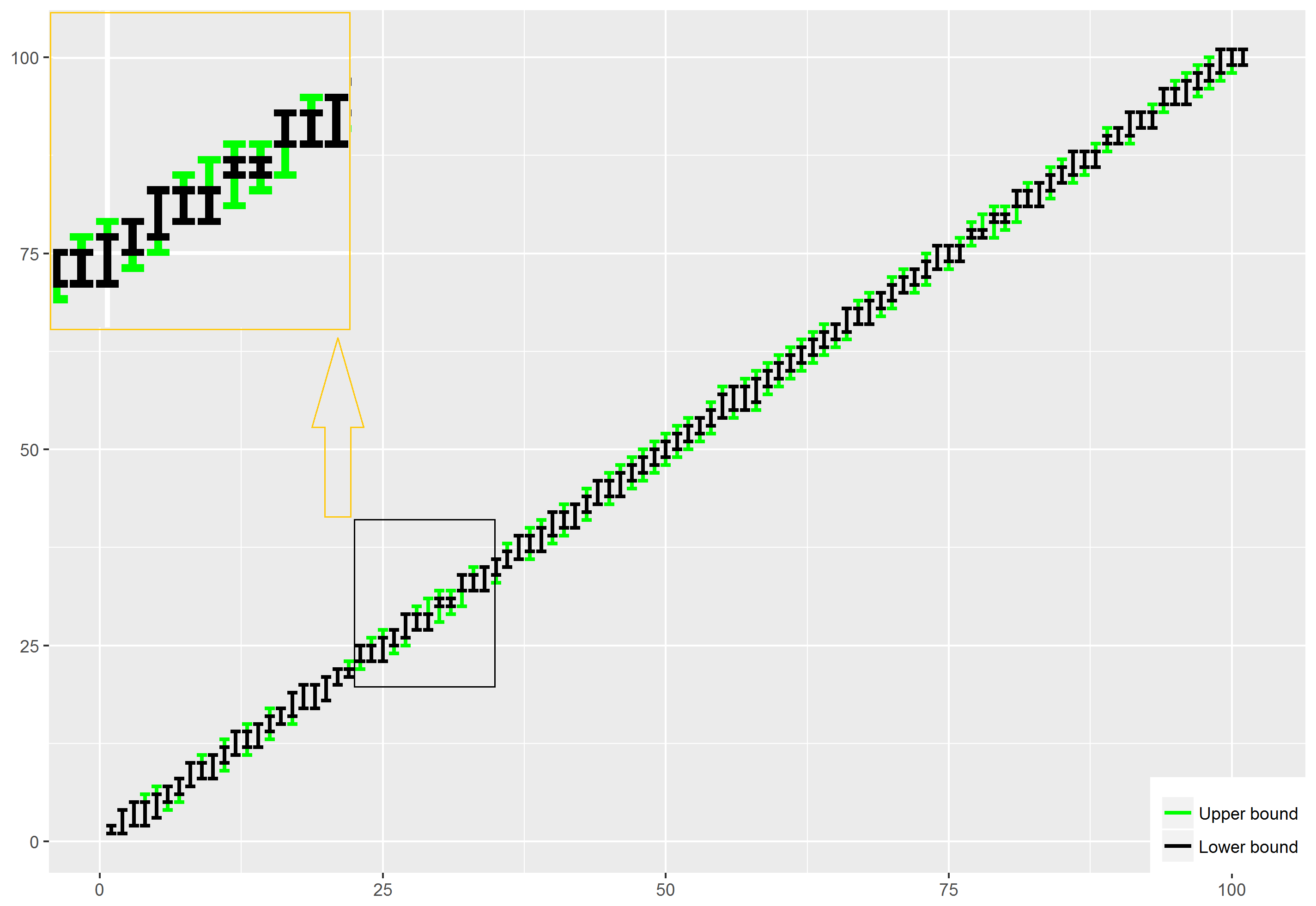}
\caption{An example with bad approximation by the lower- and upper-affine bounds of the Chi square quantile. The maximum error is 2 which happens at several centers. Green CIs are the upper-bound CIs and the black ones are the lower-bound CIs.}
\label{fig:WorstCaseBounds}
\end{figure}
\emph{Corrections on the upper-affine bound.} In the situation of figure (\ref{fig:WorstCaseBounds}), no correction on the lower-affine bound can be achieved because the minimum block-size is equal to 1. We can however adjust the upper-affine bound by changing the tangent point to $n/2 = 50$. This increases the proportion of confidence intervals where the lower- and upper-affine bounds produce the same confidence intervals to $45\%$. By making the tangent point at $n/4 = 25$, we get a proportion of $52\%$. There are two interesting facts about this example. Firstly, this is somehow a worst-case scenario which we may encounter in practice and we got a gap of 2 between the lower- and upper-bound confidence intervals. The second one is that by using an upper-affine bound tangent on the $\chi^2$ quantile curve at $n/4$, we get to improve the overall matching (where an exact result for the partitioning is obtained), and on the other hand the gap of 2 between the lower- and upper-bound confidence intervals disappears.\\
\emph{Correction on the lower-affine bound and exact partitioning.} In figure (\ref{fig:EquidistanceBounds}), we show a very basic example with 20 centers. We also added the confidence intervals produced by the exact partitioning. The exact-partitioning confidence intervals coincide with the upper-bound confidence intervals. The adjustment on the lower-affine bound (\ref{fig:Chi2ApproxAdjust}) improved the results for 4 centers out of 20 centers. The adjustment using formulas (\ref{eqn:LowerApproxAdjusted}) is done with $n_0 = 4$. Surprisingly, after the adjustment, the lower- and upper-bound confidence intervals coincide and the exact-partitioning confidence intervals are found for all the centers.
\begin{figure}[ht]  
\centering
\includegraphics[scale = 0.7]{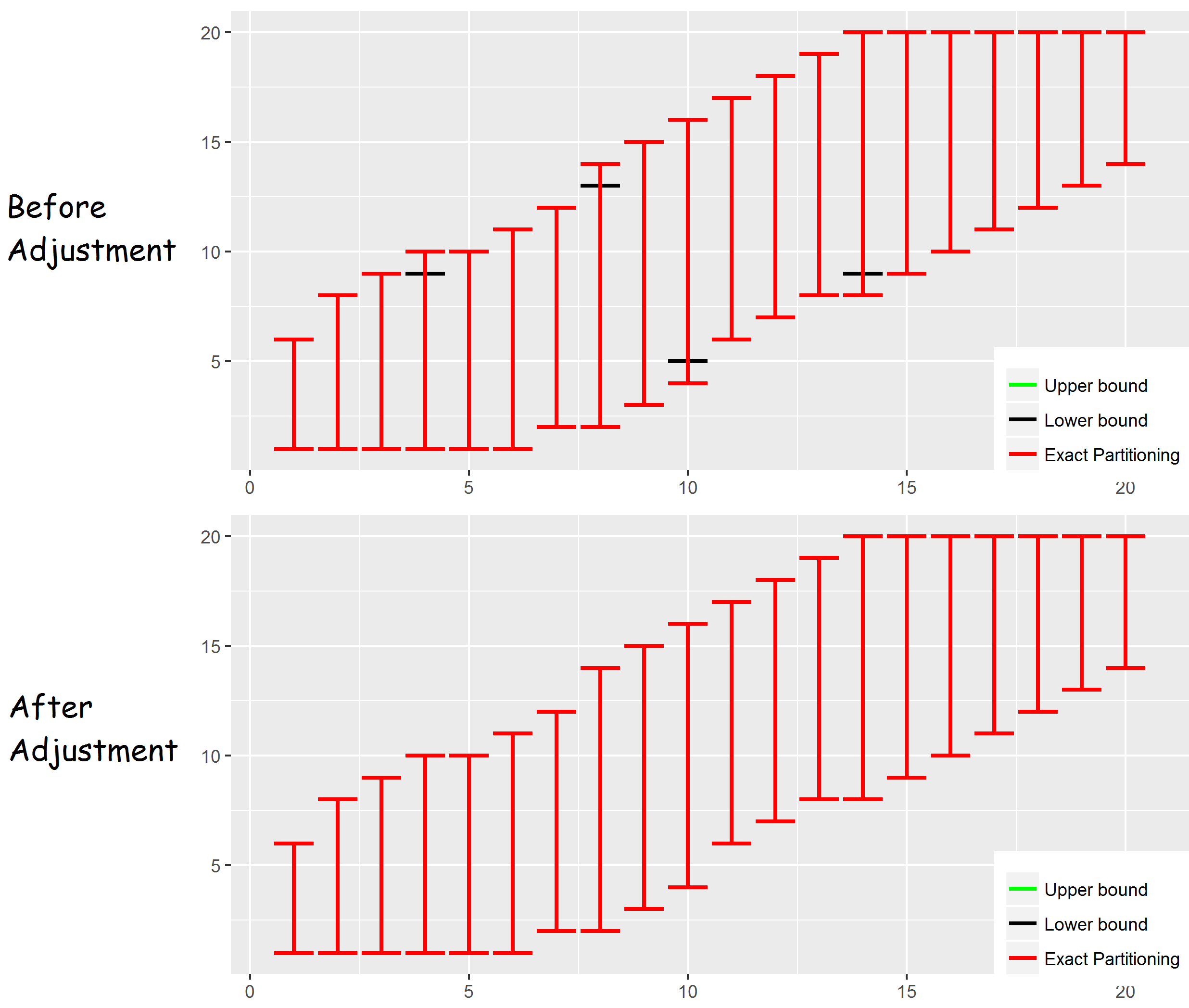}
\caption{Exact partitioning and the lower and upper bounds. An improvement on the lower bounds were made for two centers by adjusting the lower-affine bound.}
\label{fig:EquidistanceBounds}
\end{figure}
\end{example}

\section{Tukey's pairwise comparison}\label{sec:TukeyHSD}
Tukey's pairwise comparison procedure (\cite{Tukey}) well-known as the Honest Significant Difference test (HSD) is an easy way to compare observations or estimates having (assumed) Gaussian distributions especially in ANOVA models. The interesting point about the procedure is that it provides simultaneous confidence statement and controls the FWER at level $\alpha$. Moreover, it possesses optimality properties in some contexts. For example, in a balanced one-way layout, among all procedures that give equal width intervals  for all pairwise differences with joint level superior to $1-\alpha$, the HSD gives the shortest intervals, see \cite{HochbergBook} page 81 for more details. Easiness and optimality are the two reasons behind our interest in this method so that our method presented in the first part of this paper can be compared to it.
\subsection{The method} 
Suppose that $y_1,\cdots,y_n$ is a Gaussian sample generated from the Gaussian distributions $\mathcal{N}(\mu_i,\sigma_i^2)$. In order to produce a simultaneous confidence intervals for the ranks of the means $\mu_1,\cdots,\mu_n$, we test all hypotheses of the form $\mu_i=\mu_j$ using the following rejection region
\begin{equation}
\left\{\frac{|y_i-y_j|}{\sqrt{\sigma_i^2 + \sigma_j^2}}>q_{1-\alpha}\right\}
\label{eqn:RejRegionTukey}
\end{equation}
where $q_{1-\alpha}$ is the quantile of order $1-\alpha$ of the distribution of the Studentized range
\begin{equation}
\max_{i,j=1,\cdots,n}\frac{|Y_i-Y_j|}{\sqrt{\sigma_i^2 + \sigma_j^2}},
\label{eqn:StudentizedRange}
\end{equation}
and $Y_i$ and $Y_j$ are two centered Gaussian random variables with standard deviations $\sigma_i$ and $\sigma_j$ respectively. The confidence interval for the rank of center $\mu_i$, say $[r_{i,L},r_{i,U}]$ is calculated by counting how many hypotheses $H_{i,j}$ was rejected and such that $y_j<y_i$ (which yields $r_{i,L}$). Then we calculate how many hypotheses $H_{i,j}$ was not rejected and such that $y_j>y_i$ (which yields $n-r_{i,U}$). Using Tukey's HSD to construct confidence intervals for ranks was recently proposed by \cite{OurTukeyPaper}. Although other approaches (step-down algorithms) were considered in that paper, we are only interested in Tukey's HSD in order to keep the ideas clear.\\

\subsection{A new look at Tukey's pairwise comparison as a partitioning technique} 
It is possible to define a statistical (local) test over the partitions related to the confidence intervals problem presented earlier in paragraph \ref{sec:HowToRank} which yields the same confidence intervals for the ranks as Tukey's HSD. In order to present this test, we need to introduce new notations. We will write a hypothesis $H_i$ as a set of blocks where each block contains centers related by an equality. For example, we write $H_i = \{A=B=C<D<E=F<G\}$ as $H_{i,1}\cup H_{i,2}\cup H_{i,3}\cup H_{i,4}$ with $H_{i,1}=\{A=B=C\}, H_{i,2}=\{D\}, H_{i,3}=\{E=F\}, H_{i,4}=\{G\}$. \\
The first step in writing Tukey's HSD as a partitioning procedure is to calculate for each block of centers the maximum and the minimum observed values. If the observed maximum of a block (calculated using the $y_i$'s) is larger than the observed minimum of the block coming after it in the hypothesis, then the two blocks are combined (pooled) into one. Denote the new block $\tilde{H}_{i,j}$. The cardinal of $H_i$ denoted $\# H_i$ will be considered as the number of remaining blocks after all. The second step is to test the hypotheses $H_i$'s using the following rejection region
\begin{equation}
\left\{\max_{k=1,...,\# H_i}\max_{y_j\in \tilde{H}_{i,k}} \frac{|y_k-y_j|}{\sqrt{\sigma_k^2 + \sigma_j^2}} > q_{1-\alpha}\right\},
\label{eqn:RejRegionPartitionTukey}
\end{equation}
where $y_k$ corresponds to the smallest observed value related to the centers inside $\tilde{H}_{i,k}$, and $q_{1-\alpha}$ is the quantile of order $1-\alpha$ of the Studentized range (\ref{eqn:StudentizedRange}) as in Tukey's HSD procedure. Note that we use the same critical value for all the partitions from any level of the partitioning scheme.\\
In order to prove that this partitioning procedure produces the same confidence intervals as Tukey's HSD, we first need the following Lemma.
\begin{lemma}\label{lem:TukeyShortcut}
As in the case of the LR test, it suffices in the above partitioning procedure to test only correctly ordered hypotheses, that is the hypotheses whose ordering coincides with the empirical one.
\end{lemma}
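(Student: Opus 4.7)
The plan is to mirror the proof of Proposition \ref{lemm:ShortcutCorrectHyp} (LR test) but exploit the fact that, under Tukey's procedure, all partitions share the single critical value $q_{1-\alpha}$ of the Studentized range. Fix an incorrectly ordered hypothesis $H_l = H_{l,1}\cup\cdots\cup H_{l,m}$, and apply exactly the pooling rule that the partitioning test itself prescribes: whenever the observed maximum of block $H_{l,k}$ exceeds the observed minimum of $H_{l,k+1}$, merge the two blocks, and iterate until no violation remains. Call the resulting pooled hypothesis $\bar{H}_s$, with $s\leq m$.

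By construction, $\bar{H}_s$ is correctly ordered: between any two of its (pooled) blocks the observed max/min relation respects the empirical ranking. I would first observe that the test statistic in (\ref{eqn:RejRegionPartitionTukey}) depends on $H_l$ only through the pooled blocks $\tilde{H}_{l,k}$, and that this pooling produces exactly the blocks of $\bar{H}_s$. Applying the same rule to $\bar{H}_s$ is a no-op, so $H_l$ and $\bar{H}_s$ give rise to the same statistic. Because the critical value $q_{1-\alpha}$ does not depend on the number of blocks, $H_l$ is rejected if and only if $\bar{H}_s$ is rejected.

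To conclude, I would argue as in the LR case. If $H_l$ is rejected, it contributes nothing to the rank confidence set, so it may be discarded. If $H_l$ is not rejected, then $\bar{H}_s$ is also not rejected. By Definition \ref{def:InclusionConfigs}, $\bar{H}_s$ contains $H_l$ (every equality of $H_l$ is preserved, and the pooling only adds new equalities), hence for every center the rank set implied by $\bar{H}_s$ is a superset of the singleton or interval contributed by $H_l$. The union of unrejected correctly ordered hypotheses therefore already covers all the ranks that $H_l$ would have contributed, so $H_l$ can be safely omitted from the testing sequence.

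The main obstacle is the bookkeeping around the pooling step: one must check that, after a cascade of merges, the resulting blocks $\bar{H}_{s,1},\ldots,\bar{H}_{s,s}$ indeed satisfy the between-block empirical ordering (so that $\bar{H}_s$ qualifies as correctly ordered), and that no further pooling would be triggered when the statistic (\ref{eqn:RejRegionPartitionTukey}) is evaluated on $\bar{H}_s$ itself. Once this is verified, invariance of the statistic under pooling is immediate, and the remaining inclusion of rank sets is a direct consequence of the fact that pooling only adds equalities.
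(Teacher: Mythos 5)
Your proposal is correct and follows essentially the same route as the paper's proof: pool the violating adjacent blocks, note that the pooled hypothesis is correctly ordered, has the identical test statistic, and shares the common critical value $q_{1-\alpha}$, so its non-rejection is implied by that of the incorrectly ordered hypothesis while yielding at least the same rank sets. The only difference is presentational — the paper argues with a single pooling step on a three-block example taken without loss of generality, whereas you carry out the full cascade of merges explicitly.
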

\begin{proof}
Let $H_i$ be one of the partitioning hypotheses whose ordering of the centers does not comply with the empirical one. Without loss of generality, suppose that $H_i=\{H_{i,1},H_{i,2},H_{i,3}\}$. Suppose that the empirical ordering says that $\max_{y_i, s.t. \mu_i\in H_{i,1}} H_{i,1}>\min_{y_i, s.t. \mu_i\in H_{i,2}} H_{i,2}$, then our testing procedure will pool these two blocks into one $\tilde{H}_{i,1}$. In the same spirit of the proof of Proposition \ref{lemm:ShortcutCorrectHyp}, if $H_i$ is rejected, this changes nothing in terms of the confidence intervals and we only need to look at the unrejections. 
Suppose now, that $H_i$ is not rejected, then
\begin{equation}
\max_{y_j\in\tilde{H}_{i,1}}\frac{|y_j-y_{i_1}|}{\sqrt{\sigma_{i_1}^2 + \sigma_j^2}}\leq q_{1-\alpha}, \text{ and } \max_{y_j\in H_{i,3}}\frac{|y_j-y_{i_3}|}{\sqrt{\sigma_{i_3}^2 + \sigma_j^2}} \leq q_{1-\alpha}
\label{eqn:TukeyNotOrderAccept}
\end{equation}
where $y_{i_1}$ and $y_{i_3}$ correspond to the smallest observed values related to the centers inside $\tilde{H}_{i,1}$ and $H_{i,3}$ respectively. The hypothesis $\{H_{i,1}\cup H_{i,2}, H_{i,3}\}$ is one of the partitions from an upper level whose ordering coincides with the empirical one. Besides, this hypothesis is not rejected due to (\ref{eqn:TukeyNotOrderAccept}) because on the one hand, it has the same test statistic as $H_i$ and on the other hand, it has the same common critical value. Thus, for any hypothesis $H_i$ with incorrect ordering, there exists a correctly ordered hypothesis which has the same test statistic and whose unrejection implied by the unrejection of the incorrectly ordered one. Since these are the only partitions needed in the calculus of the confidence intervals, then this completes the proof.
\end{proof}
We may now state our result.
\begin{proposition}
The partitioning procedure defined using the rejection region (\ref{eqn:RejRegionPartitionTukey}) is equivalent to Tukey's pairwise comparison procedure defined through the rejection region (\ref{eqn:RejRegionTukey}). In other words, they produce the same simultaneous confidence intervals for the ranks of the centers $\mu_1,\cdots,\mu_n$.
\end{proposition}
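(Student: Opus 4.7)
The plan is to compare the two rank confidence intervals center by center. Fix $\mu_i$ and focus on its upper rank bound, denoted $U_i^P$ for the partitioning procedure and $U_i^T$ for Tukey's HSD; the lower bound follows symmetrically after reversing the empirical order. By Lemma~\ref{lem:TukeyShortcut} I may restrict attention to correctly ordered partitions only. Assume WLOG $y_1<\cdots<y_n$; each such $H$ decomposes as $B_1<B_2<\cdots<B_m$ with blocks $B_k=\{a_k,a_k+1,\ldots,b_k\}$ contiguous in the empirical order and block-minimum $y_{a_k}$. The rejection region~\eqref{eqn:RejRegionPartitionTukey} then unfolds to the clean statement: $H$ is not rejected iff for every block $B_k$ and every $j\in B_k$,
\[
t_{a_k,j}:=\frac{|y_{a_k}-y_j|}{\sqrt{\sigma_{a_k}^2+\sigma_j^2}}\leq q_{1-\alpha}.
\]

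The equivalence then reduces to a double inclusion. First, to show $U_i^P\leq U_i^T$, I pick any unrejected $H$ and suppose $\mu_i$ belongs to its block $[a,b]$, so its rank in $H$ is at most $b$. From $t_{a,j}\leq q_{1-\alpha}$ for all $j\in[a,b]$, and using $y_a\leq y_i\leq y_j$ so that $|y_i-y_j|\leq|y_a-y_j|$ when $j\geq i$, I want to conclude $t_{i,j}\leq q_{1-\alpha}$ on $[i,b]$; this says that no index $j\in(i,b]$ is Tukey-rejected against $i$, giving $U_i^T\geq b\geq U_i^P$. For the reverse inclusion, set $b=U_i^T$ and consider the witness partition
\[
H^\star=\{1\}<\cdots<\{i-1\}<\{i,i+1,\ldots,b\}<\{b+1\}<\cdots<\{n\}
\]
containing a single non-singleton block $[i,b]$. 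By the definition of $U_i^T$ the indices $j\in(i,b]$ are precisely those with $t_{ij}\leq q_{1-\alpha}$, so $H^\star$ is not rejected and the rank of $\mu_i$ in $H^\star$ reaches $b$, yielding $U_i^P\geq b$. Combining the two inclusions gives $U_i^P=U_i^T$; mirroring the argument on the lower side produces $L_i^P=L_i^T$.

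The main obstacle will be precisely the passage from $t_{a,j}\leq q_{1-\alpha}$ to $t_{i,j}\leq q_{1-\alpha}$ in the first inclusion, together with its dual --- namely, that the set $\{j>i:t_{ij}\leq q_{1-\alpha}\}$ used to build the witness partition is a contiguous initial segment $\{i+1,\ldots,U_i^T\}$. In the homoscedastic case both claims are transparent: $|y_a-y_j|$ is monotone in $a$, all denominators collapse to $\sigma\sqrt{2}$, and everything hinges on the monotonicity of the ordered $y$'s. In the heteroscedastic setting, the specific use of the block minimum as reference point in~\eqref{eqn:RejRegionPartitionTukey} is what allows the numerator inequality to go through, but extra care is needed to handle the denominator ratio $\sqrt{\sigma_a^2+\sigma_j^2}/\sqrt{\sigma_i^2+\sigma_j^2}$, and this is where I expect the technical heart of the proof to lie.
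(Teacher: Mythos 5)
Your overall strategy is the same as the paper's: invoke Lemma~\ref{lem:TukeyShortcut} to restrict to correctly ordered partitions, then establish a two-sided inclusion between the two confidence intervals, using for one direction an arbitrary unrejected partition and for the other a witness partition whose only non-singleton block is $\{\mu_i,\dots,\mu_b\}$ with $b=U_i^T$. The paper phrases the first direction contrapositively (every partition whose block through $i$ reaches $U_i^T+1$, ``or larger ones'', is rejected) where you phrase it directly, but the content and the decomposition are the same.

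The genuine gap is that you stop at the step you yourself flag as the ``technical heart'': the passage from $t_{a,j}\le q_{1-\alpha}$ to $t_{i,j}\le q_{1-\alpha}$ for $a\le i\le j$, where $t_{a,j}=|y_a-y_j|/\sqrt{\sigma_a^2+\sigma_j^2}$ (equivalently, in contrapositive form, from $t_{i,j}>q_{1-\alpha}$ to $t_{a,j}>q_{1-\alpha}$ when the reference point is moved down to the block minimum). This is not a technicality that ``extra care'' will handle: as stated it is false for general standard deviations. Take $y_a=0$, $y_i=1$, $y_j=2$ with $\sigma_a=10$ and $\sigma_i=\sigma_j=0.1$; then $t_{i,j}=1/\sqrt{0.02}\approx 7.1$ while $t_{a,j}=2/\sqrt{100.01}\approx 0.2$, so a pairwise rejection at $i$ need not survive when the comparison is re-anchored at the block minimum $y_a$. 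The dual contiguity claim you need for the witness partition --- that $\{j>i:\,t_{ij}\le q_{1-\alpha}\}$ is an initial segment $\{i+1,\dots,U_i^T\}$ --- can fail for the same reason when the $\sigma_j$'s vary. Consequently your plan closes only in the homoscedastic case; in general you must either restrict to that case or rework the argument so that not every comparison is routed through the block minimum. In fairness, the paper's own proof silently relies on exactly the same monotonicity (both in ``or larger ones'' and in identifying $[L_i,U_i]$ with a contiguous set of unrejected pairwise tests), so you have correctly located the weak point of the argument rather than introduced a new one --- but locating it is not the same as closing it, and the proposal as written does not.
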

\begin{proof}
Due to Lemma \ref{lem:TukeyShortcut}, we only need to consider correctly ordered hypotheses. The rejection region for these hypotheses turns out to be a calculus of the maximum of the maxima differences inside the blocks composing the hypothesis. Take center $\mu_i$. Suppose that with Tukey's procedure, we determined a confidence interval for the rank of $\mu_i$ to be $[L_i, U_i]$. This means that we could not reject all hypotheses $\mu_i=\mu_j$ for $j\in[L_i,U_i]$. In other words, we have:
\[\frac{|y_i - y_j|}{\sqrt{\sigma_i^2 + \sigma_j^2}}\leq q_{1-\alpha},\quad \forall j \in [L_i,U_i].\]
Besides, we rejected all hypotheses $\mu_i = \mu_l$ for $l\leq L_i-1$ and $l\geq U_i + 1$. In other words
\[\frac{|y_l - y_i|}{\sqrt{\sigma_i^2 + \sigma_l^2}} > q_{1-\alpha}, \quad \forall l \in \{1,\cdots,L_{i}-1\}\cup\{U_i + 1,\cdots,n\}.\]
Let us check what is the confidence interval that we can get using the partitioning with (\ref{eqn:RejRegionPartitionTukey}) from these rejections and non rejections. First of all, we have 
\begin{eqnarray*}
\frac{y_{U_i+1} - y_i}{\sqrt{\sigma_{U_i+1}^2+\sigma_i^2}} & > & q_{1-\alpha} \\
\frac{y_i - y_{L_i-1}}{\sqrt{\sigma_i^2 +\sigma_{L_i-1}^2}} & > & q_{1-\alpha}
\end{eqnarray*}
Thus any partition containing the block $\mu_{i}=\cdots=\mu_{U_i+1}$ or the block $\mu_{L_i-1}=\cdots=\mu_{i}$ (or larger ones) is rejected using the rejection region (\ref{eqn:RejRegionPartitionTukey}). This also entails that any hypothesis producing a larger confidence interval (more equalities) will also be rejected. Therefore, we can conclude for the time being that the confidence interval for $\mu_i$ produced by the partitioning procedure is at most the one produced by Tuky's HSD, that is $[L_i,U_i]$. \\
Suppose now that with the partitioning procedure, we got a confidence interval for $\mu_i$ equal to $[L_P,U_P]$. We are then sure that any hypothesis containing the block $\mu_i=\cdots=\mu_{U_P+1}$ or the block $\mu_{L_P-1}=\cdots=\mu_i$ is also rejected. In particular, the hypotheses $\{\mu_1<\cdots<\mu_i=\cdots=\mu_{U_P+1}<\cdots<\mu_n\}$ and $\{\mu_1<\cdots<\mu_{L_P-1}=\cdots=\mu_i<\cdots<\mu_n\}$ are rejected. This means that
\[\max_{j=i,\cdots,U_P+1} \frac{|y_i- y_j|}{\sqrt{\sigma_i^2+\sigma_j^2}} = \frac{y_{j_1} - y_i}{\sqrt{\sigma_i^2+\sigma_{j_1}^2}} > q_{1-\alpha}, \qquad \max_{j=L_P-1,\cdots,i} \frac{|y_i- y_j|}{\sqrt{\sigma_i^2+\sigma_j^2}} = \frac{y_i - y_{j_0}}{\sqrt{\sigma_i^2+\sigma_{j_0}^2}} > q_{1-\alpha}.\]
for some $j_0\in\{L_P-1,\cdots,i\}$ and $j_1\in\{i,\cdots,U_P+1\}$ verifying
\begin{eqnarray*}
\forall j\in\{i,\cdots,U_P+1\},&\;& \frac{y_{j_1} - y_i}{\sqrt{\sigma_i^2+\sigma_{j_1}^2}}>\frac{|y_i- y_j|}{\sqrt{\sigma_i^2+\sigma_j^2}} \\
 \forall j\in\{L_P-1,\cdots,i\},&\;& \frac{y_i - y_{j_0}}{\sqrt{\sigma_i^2+\sigma_{j_0}^2}} >  \frac{|y_i- y_j|}{\sqrt{\sigma_i^2+\sigma_j^2}}.
\end{eqnarray*}
This entails that with Tukey's procedure, we must reject hypotheses $\mu_i=\mu_{j_1}$ and $\mu_{j_0}=\mu_i$. Thus, the confidence interval provided by Tukey's procedure is at most the confidence interval produced by the partitioning, that is $[L_P,U_P]$.\\
We proved that Tukey's pairwise procedure cannot produce larger confidence intervals than the partitioning procedure using (\ref{eqn:RejRegionPartitionTukey}), and that the latter cannot produce larger confidence intervals than Tukey's HSD. Hence, Both methods are equivalent, that is they produce the same simultaneous confidence intervals for the ranks.

\end{proof}

\section{Simulation study}\label{sec:simulations}
In this simulation study, we compare the likelihood ratio test and Tukey's HSD. Both methods are corrected for multiple testing according to the partitioning principle, so both methods protect against the FWER at level $\alpha$ and produce simultaneous confidence intervals. Moreover, they both enjoy the power properties of the partitioning principle as a procedure for multiple testing. Recall that the partitioning principle is as powerful as the closed testing procedure and can sometimes perform better (see Section \ref{sec:PartitionPrincip}). Tukey's procedure has its own optimal properties in some contexts (\cite{HochbergBook} page 81) and serves as a good competitor.\\ 
We simulate datasets from Gaussian distributions for two situations. First, we simulate from Gaussian distributions with the same standard deviation (equal to 1) and change the range of the centers so that we understand better the behavior of the two approaches as the difference between the centers increases. After that, we make changes on the standard deviations in order to understand the influence of having different standard deviations for the centers on both compared methods. The goal of course is to identify the situations where either of the methods is a better choice than the other one.\\
The number of considered centers is 100. Algorithms \ref{algo:LevelLevel} and \ref{algo:BlockPartitioning} cannot perform an exact partitioning in a reasonable time. Therefore, the lower and upper bounds (\ref{eqn:LowerApprox},\ref{eqn:UpperApprox}) are used to produce the confidence intervals. As a result, at least $60\%$ of the confidence intervals were found exactly using the lower and upper bounds (they coincided). This increased to more than $90\%$ in the case of centers with a small range. It is interesting to notice that the gap between the lower and the upper confidence intervals per center never exceeded 1 in all these simulations which is a negligible error taking into account that the smallest confidence interval length is of the order of 20 all over the simulations.

\subsection{Equal sigma case}
We fix the standard deviations at 1 and vary the maximum distance between the centers inside one group of centers uniformly distributed. More formally, we have $\mu_i \sim \mathcal{U}(0,\text{Range})$ where Range $\in\{5,10,20,40\}$. Then the observations are generated from $\mathcal{N}(\mu_i,1)$. In figure (\ref{fig:SmallDiffEquidistanceBounds}), we notice that when the range of the data is very small, Tukey's HSD can hardly distinguish any difference between the centers which is not the case of the LR test which is still able to identify some differences between the smallest centers and the largest centers. When the range of the data increases, the difference between Tukey's HSD and the LR test becomes irrelevant until Tukey's HSD gives uniformly slightly shorter confidence intervals than the LR when the range becomes 40. The maximum difference in the ranks when the range is 40 does not exceed 2 per center which also holds for larger ranges. In other words, the gap between the two methods do not keep increasing as the range increases.
\begin{figure}[ht]
\centering
\includegraphics[scale = 0.6]{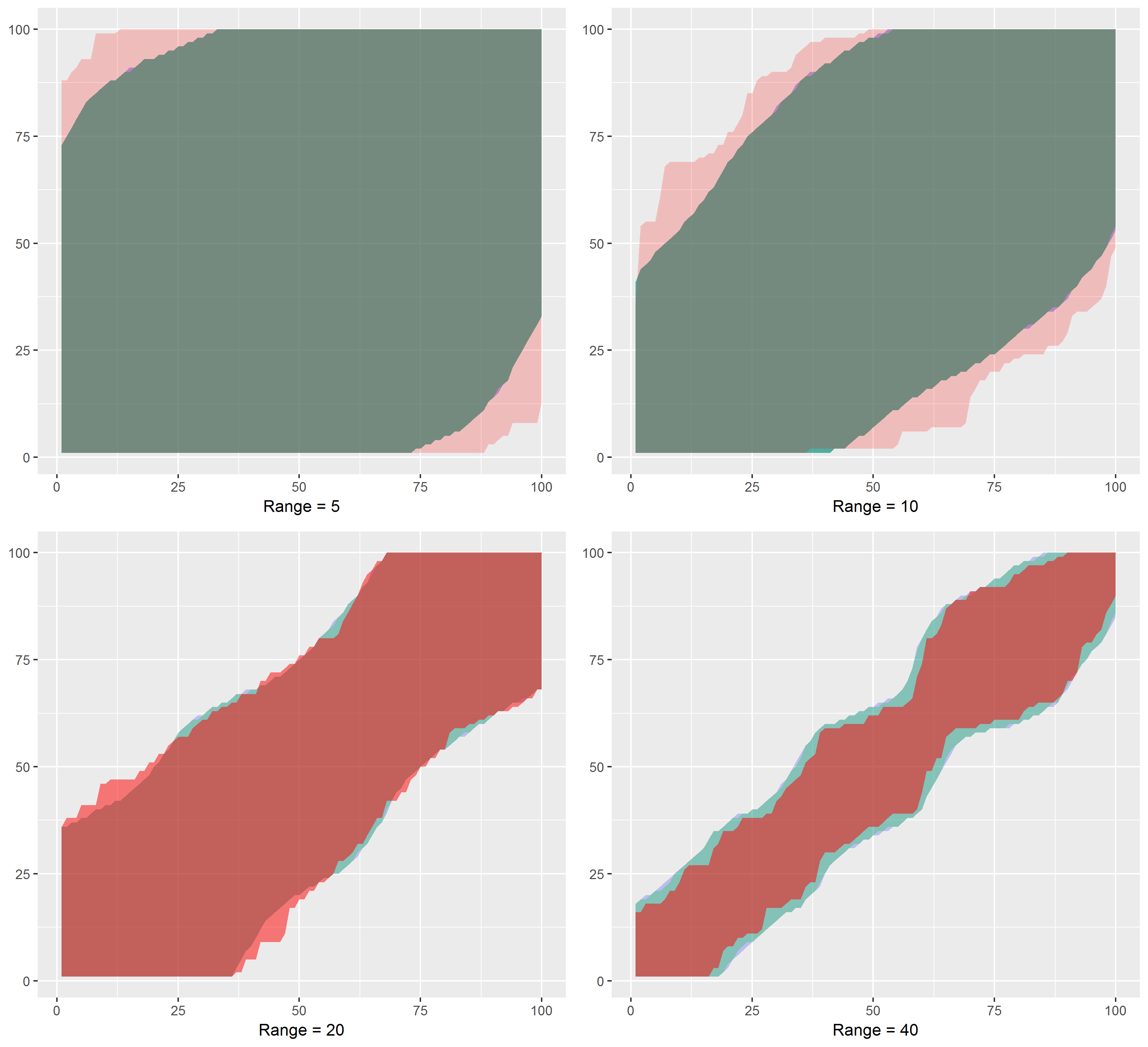}
\caption{The influence of the maximum distance between the centers on the performance of Tukey's pairwise procedure and the LR test. Green CIs are those produced by the upper bound for the partitioning whereas the black CIs are those produced by the lower one.}
\label{fig:SmallDiffEquidistanceBounds}
\end{figure}
\subsection{Different sigma case}
Notice that if the range of the centers (observations) increases, there is no point in trying to increase the values of the standard deviations because both effects will compensate for each others in the LR. It is then important not to change too many factors in the simulation in order to keep the results clear and understandable. We have mainly two factors to play with; the range of the data, and the standard deviations. We will change the range of the data by picking the most extreme situations in the equal-$\sigma$ case (Range = 10, Range = 40). In order to keep things simple, we split the data into two groups where each group of observations comes from a Gaussian. The two Gaussians are $\mathcal{N}(0,0.5)$ and $\mathcal{N}(\text{Range},\sigma^2)$ where Range $\in\{10,40\}$ and $\sigma\in\{1,2,3,4\}$.
It appears in figure (\ref{fig:TwoEqGrpDiffSig10}) that the LR statistics is not greatly influenced by the increase of the standard deviations in the same way as Tukey's HSD. In Tukey's procedure, since the test statistics concerns a pair of observations, the variability of the standard deviation will have a direct impact on the value of the statistics, whereas this variability will just be summed up through several centers in the same time reducing its effect on the test after all.
\begin{figure}[ht]
\centering
\includegraphics[scale = 0.6]{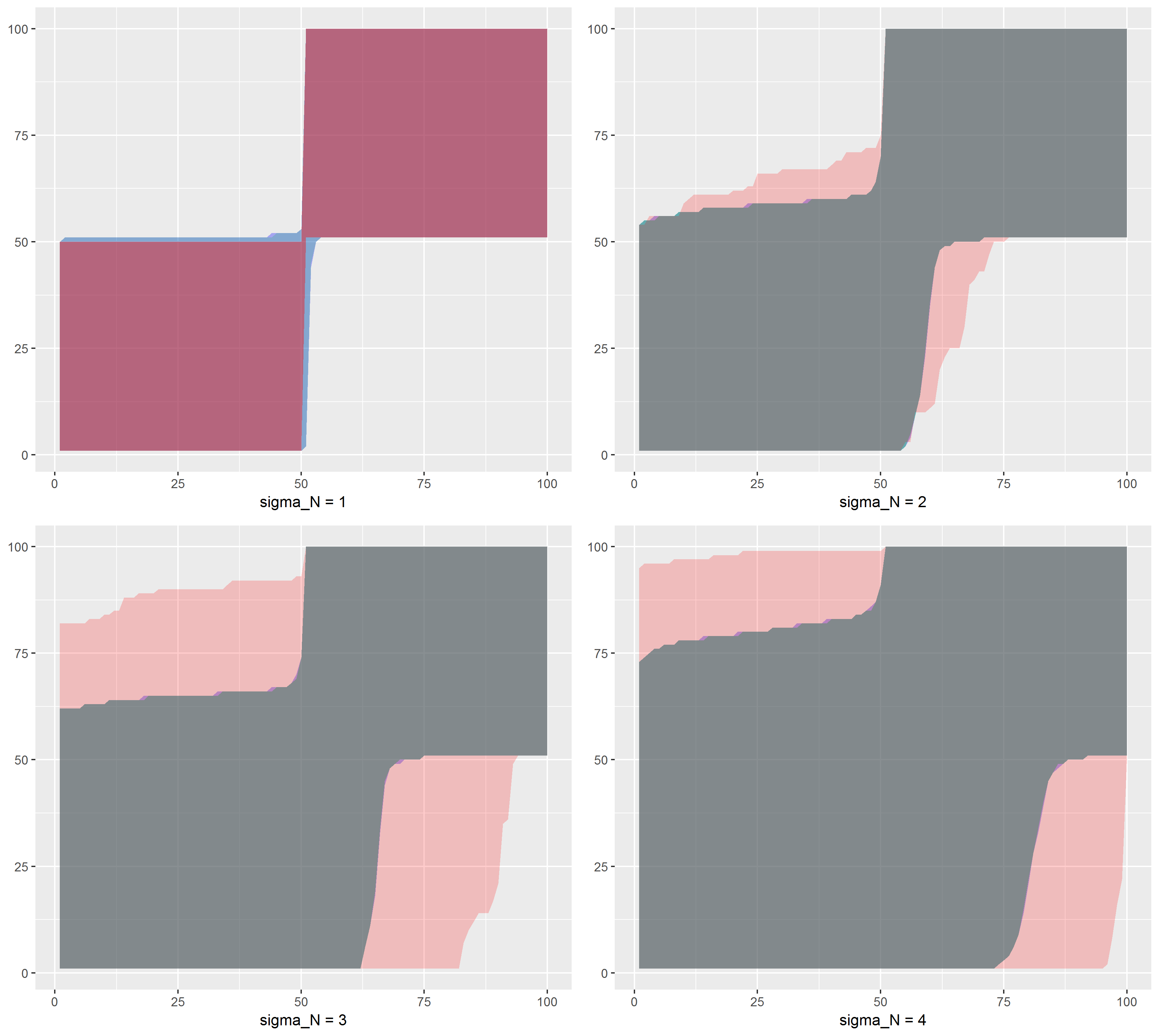}
\caption{The influence of the variability of the standard deviations of the centers on the performance of Tukey's pairwise procedure and the LR test. The maximum distance between the centers is equal to 10. Red CIs are those produced with Tukey's procedure. Green CIs are those produced by the upper bound for the partitioning whereas the black CIs are those produced by the lower one.}
\label{fig:TwoEqGrpDiffSig10}
\end{figure}

\clearpage

\appendix
\section{More details on the probability distribution of the LR statistic under full order constraint}
The objective of this appendix is to give more detailed explanations about how to obtain the probability distribution of the LR statistic for ordered hypotheses with or without equalities under the least favorable situations that is when all the centers are equal. Recall first that this calculus is only valid under the Gaussian model. For other models, some approximations and asymptotic calculus can still be made, but no exact formula can generally be obtained. \\
In order to make things simple, we only consider the case of 3 centers. The LR is given by:
\begin{equation}
LR = \min_{\mu_A,\mu_B,\mu_C\in\mathcal{H}_0} \left\{\frac{1}{\sigma_A^2}(y_A - \mu_A)^2 + \frac{1}{\sigma_B^2}(y_B - \mu_B)^2 + \frac{1}{\sigma_C^2}(y_C - \mu_C)^2\right\}
\label{eqn:LRGeneral}
\end{equation}
\paragraph{A hypothesis with two equalities $A=B=C$.} This is the most simple case. The infimum in \ref{eqn:LRGeneral} is attained at the empirical mean of the observed data, that is
\begin{equation}
MLE = \left(\frac{y_A/\sigma_A^2+ y_B/\sigma_B^2+ y_C/\sigma_C^2}{1/\sigma_A^2+1/\sigma_B^2+1/\sigma_C^2},\frac{y_A/\sigma_A^2+ y_B/\sigma_B^2+ y_C/\sigma_C^2}{1/\sigma_A^2+1/\sigma_B^2+1/\sigma_C^2},\frac{y_A/\sigma_A^2+ y_B/\sigma_B^2+ y_C/\sigma_C^2}{1/\sigma_A^2+1/\sigma_B^2+1/\sigma_C^2}\right).
\label{eqn:MLEequal}
\end{equation}
The LR now has the form
\[LR = \sum_{i\in\{A,B,C\}}{\frac{(y_i - MLE_i)^2}{\sigma_i^2}}\]
and is distributed under the null hypothesis as a $\chi^2(2)$.\\
\paragraph{A hypothesis with two inequalities.} Let us consider the configuration $A<B<C$. We have 6 cases for the order of the observations which can be grouped into three subsets.
\begin{itemize}
\item A correct ordering, i.e. $y_A<y_B<y_C$. In this case, we have $LR=0$.
\item A totally incorrect ordering, i.e. $y_C<y_B<y_A$. In this case, the infimum in equation \ref{eqn:LRGeneral} is attained on the border of the null, that is when $\mu_A=\mu_B=\mu_C$ and the $LR$ is distributed under the hypothesis $\mu_A=\mu_B=\mu_C$ as $\chi^2(2)$.
\item A partially incorrect (or correct) ordering, i.e. $y_A<y_C<y_B$ or $y_B<y_A<y_C$. In this case one of the centers will keep its unconstrained value because it respects the ordering in the null hypothesis whereas the other centers need to be pooled. The LR is then distributed under the hypothesis $\mu_A=\mu_B=\mu_C$ as $\chi^2(1)$.
 \item In between a partial or total incorrect ordering, i.e $y_B<y_C<y_A,y_C<y_A<y_B$. In this case, we must distinguish between two situations. When $\frac{y_C/\sigma_C^2+y_A/\sigma_A^2}{1/\sigma_C^2 + 1/\sigma_A^2} < y_B$, then we get a $\chi^2(1)$. Otherwise, all the centers must be pooled and we get a $\chi^2(2)$.
\end{itemize}
By summing up these cases, we reach the same conclusion found in \cite{BartholomewPAVA}. Under the hypothesis $\mu_A=\mu_B=\mu_C$
\[\mathbb{P}_{\mu_A=\mu_B=\mu_C}\left(LR>\gamma\right) = P(2,3)\mathbb{P}(\chi^2(1)>\gamma) + P(1,3)\mathbb{P}(\chi^2(2)>\gamma),\]
where the numbers $P(1,3)$ and $P(2,3)$ are the probabilities corresponding to the different scenarios in points 2,3 and 4 for the relative positions of the observations with respect to the ordering imposed by the null hypothesis.

\paragraph{A hypothesis with one equality and one inequality.} Let us consider the configuration $A<B=C$. The LR is simplified into:
\begin{equation}
LR = \inf_{\mu_A<\mu_{BC}} \left\{\frac{1}{\sigma_A^2}(y_A - \mu_A)^2 + \frac{1}{\sigma_B^2}(y_B - \mu_{BC})^2 + \frac{1}{\sigma_C^2}(y_C - \mu_{BC})^2\right\}
\label{eqn:LROneEqOneIneq}
\end{equation}
The infimum in equation \ref{eqn:LROneEqOneIneq} cannot be calculated without the knowledge of the relative positions of the observations with respect to the constraint $\mu_A<\mu_{BC}$. Thus the calculus need to be done conditionally on it. More precisely, under no restriction in equation \ref{eqn:LROneEqOneIneq}, the infimum is attained when $\mu_A = y_A$ and $\mu_{BC}=\frac{ y_B/\sigma_B^2+ y_C/\sigma_C^2}{1/\sigma_B^2+1/\sigma_C^2}$. We have the following situations:
\begin{itemize}
\item if $y_A<\frac{ y_B/\sigma_B^2+ y_C/\sigma_C^2}{1/\sigma_B^2+1/\sigma_C^2}$, then the unconstrained optimum is inside the null hypothesis and thus is attained. Hence, the LR has the form
\[LR = \frac{1}{\sigma_B^2}\left(y_B - \frac{ y_B/\sigma_B^2+ y_C/\sigma_C^2}{1/\sigma_B^2+1/\sigma_C^2}\right)^2 + \frac{1}{\sigma_C^2}\left(y_C - \frac{ y_B/\sigma_B^2+ y_C/\sigma_C^2}{1/\sigma_B^2+1/\sigma_C^2}\right)^2\]
and conditionally on $y_A<\frac{ y_B/\sigma_B^2+ y_C/\sigma_C^2}{1/\sigma_B^2+1/\sigma_C^2}$, the LR has a $\chi^2(1)$ distribution under the hypothesis $\mu_A=\mu_B=\mu_C$.\\
\item if $y_A\geq\frac{ y_B/\sigma_B^2+ y_C/\sigma_C^2}{1/\sigma_B^2+1/\sigma_C^2}$, then the unconstrained optimum is not inside the null hypothesis and the optimum is thus attained on the border of the null hypothesis. In other words when $\mu_A=\mu_{BC}$. Thus the LR is given by
\[LR = \sum_{i\in\{A,B,C\}}{\frac{(y_i - MLE_i)^2}{\sigma^2}}\]
(MLE is defined by equation (\ref{eqn:MLEequal})) and is distributed conditionally on $y_A\geq\frac{ y_B/\sigma_B^2+ y_C/\sigma_C^2}{1/\sigma_B^2+1/\sigma_C^2}$ as a $\chi^2(2)$ under the hypothesis $\mu_A=\mu_B=\mu_C$. 
\end{itemize}
Since $y_A$ and $\frac{ y_B/\sigma_B^2+ y_C/\sigma_C^2}{1/\sigma_B^2+1/\sigma_C^2}$ are independent Gaussian random variables, the probability that either of them is greater than the other is equal to $1/2$ and we may conclude that under the hypothesis $\mu_A=\mu_B=\mu_C$
\begin{equation}
LR \;\sim\; \frac{1}{2} \chi^2(1) + \frac{1}{2} \chi^2(2).
\label{eqn:ProbaLawEqIneq3}
\end{equation}
For the case of more than 3 observations and an arbitrary number of equalities, we should be able to conclude in a similar way the probability distribution.

\section{A pseudo-code for the block-check algorithm described in paragraph \ref{subsec:BlocksApproach}}
\resizebox{!}{13cm}{
\begin{algorithm}[H]
\label{algo:BlockPartitioning}
\KwData{Ordered sample $y_1,\cdots,y_n$ and standard deviations $\sigma_1,\cdots,\sigma_n$. Vectors MinBlock and MaxBlock and the initial confidence intervals used to find MinBlock.
}
\KwResult{For each $i, [a_i,b_i]$ such that $\mathbb{P}(\forall i, \mu_i\in[a_i,b_i])\geq 1-\alpha$.}
\eIf {the top hypothesis is not rejected}
{Set confidence intervals to $[1,n]$; $\forall i, a_i=1, b_i = n$.}
{
 Set the confidence intervals to the initial confidence intervals.
 \For{$i$ from $1$ to $n-1$}
 {
	\If{MaxBlock$[i]\geq$(MinBlock$[i]+1$)}
	{
	  \For{k descending from MaxBlock[i] to (MinBlock[i]+1)}
		{
		  // Treat blocks of the form $\mu_a=\cdots=\mu_n$ or of the form $\mu_a=\cdots=\mu_{n-1}$.\;
		  \If{$k+i\geq n-1$}
			{
			  \If{the current block is not significant to the actual confidence interval}
				{
				  break the loop over $k$ because no further info can be gained for this center.
				}
			  \For{$j$ in partitions of $\{\mu_{k+1},\cdots,\mu_n\}$}
				{
				  Test the combination of the current block with partition number $j$ added to the left of it\;
					\If{the combination is not rejected}
					{
					  Update the confidence intervals\;
					  Break the loop over the partitions of $\{\mu_{k+1},\cdots,\mu_n\}$.
					}
				}
				\If{not all partitions in $\{\mu_{k+1},\cdots,\mu_n\}$ were tested}
				{
				  Break the loop over $k$ because some hypothesis was not rejected.
				}
			}
		 // Treat blocks of the form $\mu_a=\cdots=\mu_b$ with $a\geq 1$ and $b<n-1$.\;
		 \If{the current block is not significant to the actual confidence interval}
		 {
			  break the loop over $k$ because no further info can be gained for this center.
		 }
		 \For{$j$ in partitions of $\{\mu_{b+1},\cdots,\mu_n\}$}
		 {
		   Test the combination of the current block with partition number $j$ added to the right of it.\;
			 \If{the combination is not rejected}
			 {
				  Update the confidence intervals\;
				  Break the loop over the partitions of $\{\mu_{b+1},\cdots,\mu_n\}$.
			 }
			 \For{$s$ in partitions of $\{\mu_{1},\cdots,\mu_{a-1}\}$ in case there is a left part}
			 {
				 Test the combination of the current block with partition $j$ added to the right and partition $s$ to the left of it.\;
				 \If{the combination is not rejected}
				 {
					 Update the confidence intervals\;
					 Break the loop over the partitions of $\{\mu_{1},\cdots,\mu_{a-1}\}$.
				 }
			 }
			 \If{not all partitions in $\{\mu_{1},\cdots,\mu_{a-1}\}$ were tested}
				{
				  Break the loop over $j$ because some hypothesis was not rejected.
				}
		 }
		 \If{not all partitions in $\{\mu_{b+1},\cdots,\mu_n\}$ were tested}
		 {
			  Break the loop over $k$ because some hypothesis was not rejected.
		 }
		}

	}
 }
}
\caption{A block-based algorithm}
\end{algorithm}}

\bibliographystyle{plainnat}
\bibliography{biblioFile}

\begin{thebibliography}{44}
\providecommand{\natexlab}[1]{#1}
\providecommand{\url}[1]{\texttt{#1}}
\expandafter\ifx\csname urlstyle\endcsname\relax
  \providecommand{\doi}[1]{doi: #1}\else
  \providecommand{\doi}{doi: \begingroup \urlstyle{rm}\Url}\fi

\bibitem[Al~Mohamad et~al.(2017)Al~Mohamad, J.~Goeman, and W.~van
  Zwet]{OurTukeyPaper}
Diaa Al~Mohamad, Jelle J.~Goeman, and Erik W.~van Zwet.
\newblock An improvement of tukey's {HSD} with application to ranking
  institutions.
\newblock \emph{arXiv}, 2017.
\newblock URL \url{https://arxiv.org/abs/1708.02428}.

\bibitem[Apostol and Mnatsakanian(2003)]{Apostol}
Tom~M. Apostol and Mamikon~A. Mnatsakanian.
\newblock Sums of squares of distances in m-space.
\newblock \emph{The American Mathematical Monthly}, 110\penalty0 (6):\penalty0
  516--526, 2003.
\newblock ISSN 00029890, 19300972.
\newblock URL \url{http://www.jstor.org/stable/3647907}.

\bibitem[Ayer et~al.(1955)Ayer, Brunk, Ewing, Reid, and Silverman]{AyerPAVA}
Miriam Ayer, H.~D. Brunk, G.~M. Ewing, W.~T. Reid, and Edward Silverman.
\newblock An empirical distribution function for sampling with incomplete
  information.
\newblock \emph{Ann. Math. Statist.}, 26\penalty0 (4):\penalty0 641--647, 12
  1955.
\newblock \doi{10.1214/aoms/1177728423}.
\newblock URL \url{http://dx.doi.org/10.1214/aoms/1177728423}.

\bibitem[Barlow(1972)]{BarlowBook}
R.E. Barlow.
\newblock \emph{Statistical Inference Under Order Restrictions: The Theory and
  Application of Isotonic Regression}.
\newblock Out-of-print Books on demand. J. Wiley, 1972.
\newblock ISBN 9780471049708.
\newblock URL \url{https://books.google.nl/books?id=DEamySUDBWcC}.

\bibitem[Bartholomew(1959)]{BartholomewPAVA}
D.~J. Bartholomew.
\newblock A test of homogeneity for ordered alternatives. ii.
\newblock \emph{Biometrika}, 46\penalty0 (3/4):\penalty0 328--335, 1959.
\newblock ISSN 00063444.
\newblock URL \url{http://www.jstor.org/stable/2333530}.

\bibitem[Bie(2013)]{TingBieMasterThesis}
Ting Bie.
\newblock Confidence intervals for ranks: Theory and applications in binomial
  data.
\newblock Master's thesis, Uppsala University, Sweden, 2013.
\newblock Master thesis under the supervision of R. Larsson.

\bibitem[C.~van Houwelingen and Brand(1999)]{Hans}
Hans C.~van Houwelingen and Ronald Brand.
\newblock Empirical bayes methods for monitoring health care quality.
\newblock In \emph{Bulletin of the Institute of Statistics}, Finland, 1999.

\bibitem[Calian et~al.(2008)Calian, Li, and Hsu]{CalianShortcuts}
Violeta Calian, Dongmei Li, and Jason~C. Hsu.
\newblock Partitioning to uncover conditions for permutation tests to control
  multiple testing error rates.
\newblock \emph{Biometrical Journal}, 50\penalty0 (5):\penalty0 756--766, 2008.
\newblock ISSN 1521-4036.
\newblock \doi{10.1002/bimj.200710471}.
\newblock URL \url{http://dx.doi.org/10.1002/bimj.200710471}.

\bibitem[de~Leeuw et~al.(2009)de~Leeuw, Hornik, and Mair]{LeeuwReview}
Jan de~Leeuw, Kurt Hornik, and Patrick Mair.
\newblock Isotone optimization in {R}: Pool-adjacent-violators algorithm (pava)
  and active set methods.
\newblock \emph{Journal of Statistical Software}, 32\penalty0 (5):\penalty0
  1--24, 2009.
\newblock URL \url{http://www.jstatsoft.org/v32/i05/}.

\bibitem[Feudtner et~al.(2011)Feudtner, Berry, Parry, Hain, Morse, Slonim,
  Shah, and Hall]{Feudtner}
Chris Feudtner, Jay~G. Berry, Gareth Parry, Paul Hain, Rustin~B. Morse,
  Anthony~D. Slonim, Samir~S. Shah, and Matt Hall.
\newblock Statistical uncertainty of mortality rates and rankings for
  children{\textquoteright}s hospitals.
\newblock \emph{Pediatrics}, 128\penalty0 (4):\penalty0 e966--e972, 2011.
\newblock ISSN 0031-4005.
\newblock \doi{10.1542/peds.2010-3074}.
\newblock URL \url{http://pediatrics.aappublications.org/content/128/4/e966}.

\bibitem[Finner and Strassburger(2002)]{Finner}
H.~Finner and K.~Strassburger.
\newblock The partitioning principle: a powerful tool in multiple decision
  theory.
\newblock \emph{Ann. Statist.}, 30\penalty0 (4):\penalty0 1194--1213, 08 2002.
\newblock \doi{10.1214/aos/1031689023}.
\newblock URL \url{http://dx.doi.org/10.1214/aos/1031689023}.

\bibitem[Gerzoff and Williamson(2001)]{Gerzoff}
Robert~B. Gerzoff and G.~David Williamson.
\newblock Who's number one? the impact of variability on rankings based on
  public health indicators.
\newblock \emph{Public Health Reports (1974-)}, 116\penalty0 (2):\penalty0
  158--164, 2001.
\newblock ISSN 00333549.
\newblock URL \url{http://www.jstor.org/stable/4598624}.

\bibitem[Goeman and Solari(2010)]{GoemanSeqRej}
Jelle~J. Goeman and Aldo Solari.
\newblock The sequential rejection principle of familywise error control.
\newblock \emph{Ann. Statist.}, 38\penalty0 (6):\penalty0 3782--3810, 12 2010.
\newblock \doi{10.1214/10-AOS829}.
\newblock URL \url{http://dx.doi.org/10.1214/10-AOS829}.

\bibitem[Goeman and Solari(2014)]{GoemanSolariMultHypoGenom}
Jelle~J. Goeman and Aldo Solari.
\newblock Multiple hypothesis testing in genomics.
\newblock \emph{Statistics in Medicine}, 33\penalty0 (11):\penalty0 1946--1978,
  2014.
\newblock ISSN 1097-0258.
\newblock \doi{10.1002/sim.6082}.
\newblock URL \url{http://dx.doi.org/10.1002/sim.6082}.

\bibitem[Goldstein and Spiegelhalter(1996)]{GoldsteinSpiegel}
Harvey Goldstein and David~J. Spiegelhalter.
\newblock League tables and their limitations: Statistical issues in
  comparisons of institutional performance.
\newblock \emph{Journal of the Royal Statistical Society. Series A (Statistics
  in Society)}, 159\penalty0 (3):\penalty0 385--443, 1996.
\newblock ISSN 09641998, 1467985X.
\newblock URL \url{http://www.jstor.org/stable/2983325}.

\bibitem[Gr{\"o}mping(2010)]{IcInferPackage}
Ulrike Gr{\"o}mping.
\newblock Inference with linear equality and inequality constraints using {R}:
  The package {ic.infer}.
\newblock \emph{Journal of Statistical Software}, 33\penalty0 (10):\penalty0
  1--31, 2010.
\newblock URL \url{http://www.jstatsoft.org/v33/i10/}.

\bibitem[Hae-Young(2015)]{KimMCMsReview}
Kim Hae-Young.
\newblock Statistical notes for clinical researchers: post-hoc multiple
  comparisons.
\newblock \emph{Restorative Dentistry $\&$ Endodontics}, 40\penalty0
  (2):\penalty0 172--176, 2015.
\newblock \doi{10.5395/rde.2015.40.2.172}.

\bibitem[Hall and Miller(2009)]{HallMillerInconsistRank}
Peter Hall and Hugh Miller.
\newblock Using the bootstrap to quantify the authority of an empirical
  ranking.
\newblock \emph{Ann. Statist.}, 37\penalty0 (6B):\penalty0 3929--3959, 12 2009.
\newblock \doi{10.1214/09-AOS699}.
\newblock URL \url{http://dx.doi.org/10.1214/09-AOS699}.

\bibitem[Hochberg and Tamhane(1987)]{HochbergBook}
Y.~Hochberg and A.C. Tamhane.
\newblock \emph{Multiple comparison procedures}.
\newblock Wiley series in probability and mathematical statistics: Applied
  probability and statistics. Wiley, 1987.
\newblock ISBN 9780471822226.
\newblock URL \url{https://books.google.nl/books?id=qxnvAAAAMAAJ}.

\bibitem[Holm(2012)]{HolmUnpublished}
S.~Holm.
\newblock Confidence intervals for ranks.
\newblock \emph{Department of Mathematical Statistics}, 2012.
\newblock Unpublished manucript.

\bibitem[Knuth(2005)]{Knuth}
Donald~E. Knuth.
\newblock \emph{The Art of Computer Programming, Volume 4, Fascicle 3:
  Generating All Combinations and Partitions}.
\newblock Addison-Wesley Professional, 2005.
\newblock ISBN 0201853949.

\bibitem[Kudo(1963)]{Kudo}
Akio Kudo.
\newblock A multivariate analogue of the one-sided test.
\newblock \emph{Biometrika}, 50\penalty0 (3/4):\penalty0 403--418, 1963.
\newblock ISSN 00063444.
\newblock URL \url{http://www.jstor.org/stable/2333909}.

\bibitem[Laird and Louis(1989)]{LairdThomasBayes}
Nan~M. Laird and Thomas~A. Louis.
\newblock Empirical bayes ranking methods.
\newblock \emph{Journal of Educational Statistics}, 14\penalty0 (1):\penalty0
  29--46, 1989.
\newblock \doi{10.3102/10769986014001029}.
\newblock URL \url{http://dx.doi.org/10.3102/10769986014001029}.

\bibitem[Lemmers et~al.(2007)Lemmers, A.M.Kremer, and F.Borm]{LemmersZscore}
Oscar Lemmers, Jan A.M.Kremer, and George F.Borm.
\newblock Incorporating natural variation into {IVF} clinic league tables.
\newblock \emph{Human Reproduction}, 22\penalty0 (5):\penalty0 1359--1362,
  2007.
\newblock \doi{10.1093/humrep/dem018}.

\bibitem[Lemmers et~al.(2009)Lemmers, Broeders, Verbeek, Heeten, Holland, and
  Borm]{LemmersZscoreAgain}
Oscar Lemmers, Mireille Broeders, Andr\'e Verbeek, Gerard~Den Heeten, Roland
  Holland, and George~F Borm.
\newblock League tables of breast cancer screening units: Worst-case and
  best-case scenario ratings helped in exposing real differences between
  performance ratings.
\newblock \emph{Journal of Medical Screening}, 16\penalty0 (2):\penalty0
  67--72, 2009.
\newblock \doi{10.1258/jms.2009.008093}.
\newblock URL \url{http://dx.doi.org/10.1258/jms.2009.008093}.
\newblock PMID: 19564518.

\bibitem[Lin et~al.(2006)Lin, Louis, Paddock, and Ridgeway]{LinThomasBayes}
Rongheng Lin, Thomas~A. Louis, Susan~M. Paddock, and Greg Ridgeway.
\newblock Loss function based ranking in two-stage, hierarchical models.
\newblock \emph{Bayesian Anal.}, 1\penalty0 (4):\penalty0 915--946, 12 2006.
\newblock \doi{10.1214/06-BA130}.
\newblock URL \url{http://dx.doi.org/10.1214/06-BA130}.

\bibitem[Lin et~al.(2009)Lin, Louis, Paddock, and Ridgeway]{LinThomasBayesBis}
Rongheng Lin, Thomas~A. Louis, Susan~M. Paddock, and Greg Ridgeway.
\newblock Ranking usrds provider specific smrs from 1998--2001.
\newblock \emph{Health Services and Outcomes Research Methodology}, 9\penalty0
  (1):\penalty0 22--38, 2009.
\newblock ISSN 1572-9400.
\newblock \doi{10.1007/s10742-008-0040-0}.
\newblock URL \url{http://dx.doi.org/10.1007/s10742-008-0040-0}.

\bibitem[Lingsma et~al.(2009)Lingsma, Eijkemans, and Steyerberg]{LingsmaER}
Hester~F. Lingsma, Marinus~JC Eijkemans, and Ewout~W. Steyerberg.
\newblock Incorporating natural variation into ivf clinic league tables: The
  expected rank.
\newblock \emph{BMC Medical Research Methodology}, 9\penalty0 (1):\penalty0 53,
  2009.
\newblock ISSN 1471-2288.
\newblock \doi{10.1186/1471-2288-9-53}.
\newblock URL \url{http://dx.doi.org/10.1186/1471-2288-9-53}.

\bibitem[Marcus et~al.(1976)Marcus, Peritz, and Gabriel]{MarcusClosedTesting}
Ruth Marcus, Eric Peritz, and K.~R. Gabriel.
\newblock On closed testing procedures with special reference to ordered
  analysis of variance.
\newblock \emph{Biometrika}, 63\penalty0 (3):\penalty0 655--660, 1976.
\newblock ISSN 00063444.
\newblock URL \url{http://www.jstor.org/stable/2335748}.

\bibitem[Marshall and Spiegelhalter(1998)]{Spiegelhalter}
E.~Clare Marshall and David~J. Spiegelhalter.
\newblock Reliability of league tables of in vitro fertilisation clinics:
  retrospective analysis of live birth rates.
\newblock \emph{BMJ : British Medical Journal}, 316:\penalty0 1701--1705, 1998.

\bibitem[Miles(1959)]{Miles59}
R.~E. Miles.
\newblock The complete amalgamation into blocks, by weighted means, of a finite
  set of real numbers.
\newblock \emph{Biometrika}, 46\penalty0 (3/4):\penalty0 317--327, 1959.

\bibitem[Noma et~al.(2010)Noma, Matsui, Omori, and Sato]{NomaBayes}
Hisashi Noma, Shigeyuki Matsui, Takashi Omori, and Tosiya Sato.
\newblock Bayesian ranking and selection methods using hierarchical mixture
  models in microarray studies.
\newblock \emph{Biostatistics}, 11\penalty0 (2):\penalty0 281, 2010.
\newblock \doi{10.1093/biostatistics/kxp047}.
\newblock URL \url{http://dx.doi.org/10.1093/biostatistics/kxp047}.

\bibitem[Parry et~al.(1998)Parry, Gould, McCabe, and Tarnow-Mordi]{ParryWscore}
Gareth~J Parry, Craig~R Gould, Chris~J McCabe, and William~O Tarnow-Mordi.
\newblock Annual league tables of mortality in neonatal intensive care units:
  longitudinal study.
\newblock \emph{BMJ}, 316\penalty0 (7149):\penalty0 1931--1935, 1998.
\newblock ISSN 0959-8138.
\newblock \doi{10.1136/bmj.316.7149.1931}.
\newblock URL \url{http://www.bmj.com/content/316/7149/1931}.

\bibitem[Rafter et~al.(2002)Rafter, Abell, and Braselton]{Rafter}
John~A. Rafter, Martha~L. Abell, and James~P. Braselton.
\newblock Multiple comparison methods for means.
\newblock \emph{SIAM Review}, 44\penalty0 (2):\penalty0 259--278, 2002.
\newblock ISSN 00361445.
\newblock URL \url{http://www.jstor.org/stable/4148355}.

\bibitem[Robertson and Wegman(1978)]{Robertson78}
Tim Robertson and Edward~J. Wegman.
\newblock Likelihood ratio tests for order restrictions in exponential
  families.
\newblock \emph{Ann. Statist.}, 6\penalty0 (3):\penalty0 485--505, 05 1978.

\bibitem[Shapiro(1988)]{Shapiro}
A.~Shapiro.
\newblock Towards a unified theory of inequality constrained testing in
  multivariate analysis.
\newblock \emph{International Statistical Review}, 56:\penalty0 49--62, 1988.

\bibitem[Silvapulle and Sen(2004)]{SilvapulleBook}
M.J. Silvapulle and P.K. Sen.
\newblock \emph{Constrained Statistical Inference: Order, Inequality, and Shape
  Constraints}.
\newblock Wiley Series in Probability and Statistics. Wiley, 2004.
\newblock ISBN 9780471208273.
\newblock URL \url{https://books.google.nl/books?id=npdjQgAACAAJ}.

\bibitem[Spiegelhalter(2005)]{SpiegelhalterFunnelPlots}
David~J. Spiegelhalter.
\newblock Funnel plots for comparing institutional performance.
\newblock \emph{Statistics in Medicine}, 24\penalty0 (8):\penalty0 1185--1202,
  2005.
\newblock ISSN 1097-0258.
\newblock \doi{10.1002/sim.1970}.
\newblock URL \url{http://dx.doi.org/10.1002/sim.1970}.

\bibitem[Stefansson et~al.(1988)Stefansson, Kim, and
  Hasu]{PartitioningPrinciple}
G.~Stefansson, W.~Kim, and J.C. Hasu.
\newblock On confidence sets in multiple comparisons.
\newblock \emph{Statistical Decision Theory and Related Topics IV(S. S. Gupta
  and J. O. Berger, eds.)}, 2:\penalty0 89--104, 1988.

\bibitem[Tekkis et~al.(2003)Tekkis, McCulloch, Steger, Benjamin, and
  Poloniecki]{TekkisFunnelPlot}
Paris~P Tekkis, Peter McCulloch, Adrian~C Steger, Irving~S Benjamin, and Jan~D
  Poloniecki.
\newblock Mortality control charts for comparing performance of surgical units:
  validation study using hospital mortality data.
\newblock \emph{BMJ}, 326\penalty0 (7393):\penalty0 786, 2003.
\newblock ISSN 0959-8138.
\newblock \doi{10.1136/bmj.326.7393.786}.
\newblock URL \url{http://www.bmj.com/content/326/7393/786}.

\bibitem[Tukey(1953)]{Tukey}
J.~W. Tukey.
\newblock The problem of multiple comparisons.
\newblock \emph{The Collected Works of John W. Tukey VIII. Multiple
  Comparisons: 1948 - 1983}, pages 1--300, 1953.
\newblock Unpublished manuscript.

\bibitem[van Eeden~C.(1958)]{vanEeden}
van Eeden~C.
\newblock \emph{Testing and estimating ordered parameters of probability
  distributions}.
\newblock PhD thesis, University of Amsterdam, 1958.

\bibitem[Xie et~al.(2009)Xie, Singh, and Zhang]{XieMiddleRank}
Minge Xie, Kesar Singh, and Cun-Hui Zhang.
\newblock Confidence intervals for population ranks in the presence of ties and
  near ties.
\newblock \emph{Journal of the American Statistical Association}, 104\penalty0
  (486):\penalty0 775--788, 2009.
\newblock \doi{10.1198/jasa.2009.0142}.
\newblock URL \url{http://dx.doi.org/10.1198/jasa.2009.0142}.

\bibitem[Zhang et~al.(2014)Zhang, Luo, Zhu, Stinchcomb, Campbell, Carter,
  Gilkeson, and Feuer]{Zhang}
Shunpu Zhang, Jun Luo, Li~Zhu, David~G. Stinchcomb, Dave Campbell, Ginger
  Carter, Scott Gilkeson, and Eric~J. Feuer.
\newblock Confidence intervals for ranks of age-adjusted rates across states or
  counties.
\newblock \emph{Statistics in Medicine}, 33\penalty0 (11):\penalty0 1853--1866,
  2014.
\newblock ISSN 1097-0258.
\newblock \doi{10.1002/sim.6071}.
\newblock URL \url{http://dx.doi.org/10.1002/sim.6071}.

\end{thebibliography}

\end{document}